\newcommand{\Oh}{\mathcal{O}}
\newcommand{\OhOp}[1]{\Oh\mathopen{}\mathclose\bgroup\left( #1 \aftergroup\egroup\right)}
\DeclareMathOperator{\poly}{{\rm poly}}
\newcommand{\FPT}{{\sf FPT}\xspace}
\newcommand{\NP}{{\sf NP}\xspace}
\newcommand{\mytodo}[2]{\todo[size=\tiny, color=#1!50!white]{#2}\xspace}
\newcommand{\mkcom}[1]{\mytodo{blue}{#1}}
\newcommand{\hy}{\hbox{-}\nobreak\hskip0pt}
\newcommand{\suppo}{\mathrm{supp}}
\newcommand{\tw}{\textrm{tw}}
\newcommand{\coeff}{\mathrm{coeff}}
\newcommand{\sep}{\mathfrak{sep}}
\newcommand{\CC}{\mathcal{C}}
\DeclarePairedDelimiter\ceil{\lceil}{\rceil}
\DeclarePairedDelimiter\floor{\lfloor}{\rfloor}
\newcommand{\intcone}{\mathrm{cone}_{\N}}
\newcommand{\df}{:=}
\def\ve#1{\mathchoice{\mbox{\boldmath$\displaystyle\bf#1$}}
{\mbox{\boldmath$\textstyle\bf#1$}}
{\mbox{\boldmath$\scriptstyle\bf#1$}}
{\mbox{\boldmath$\scriptscriptstyle\bf#1$}}}
\newcommand\vea{{\ve a}}
\newcommand\veb{{\ve b}}
\newcommand\vecc{{\ve c}}
\newcommand\ved{{\ve d}}
\newcommand\veg{{\ve g}}
\newcommand\veh{{\ve h}}
\newcommand\vel{{\ve l}}
\newcommand\ven{{\ve n}}
\newcommand\vem{{\ve m}}
\newcommand\vep{{\ve p}}
\newcommand\veu{{\ve u}}
\newcommand\vev{{\ve v}}
\newcommand\vew{{\ve w}}
\newcommand\vex{{\ve x}}
\newcommand\vey{{\ve y}}
\newcommand\vez{{\ve z}}
\newcommand\vealpha{{\ve \alpha}}
\newcommand\vebeta{{\ve \beta}}
\newcommand\velambda{{\ve \lambda}}
\newcommand\vezeta{{\ve \zeta}}
\newcommand\vemu{{\ve \mu}}
\newcommand\vezero{{\ve 0}}
\newcommand\veone{{\ve 1}}
\newcommand\veA{{\ve A}}
\newcommand\veB{{\ve B}}
\newcommand\veX{{\ve X}}
\newcommand\veZ{{\ve Z}}
\def\R{\mathbb{R}}
\def\Z{\mathbb{Z}}
\def\N{\mathbb{N}}
\def\Q{\mathbb{Q}}
\def\G{\mathcal{G}}
\def\Ker{\textrm{Ker}}
\def \la {\langle}
\def \ra {\rangle}
\newcommand{\sv}[1]{}
\newcommand{\lv}[1]{#1}
\newcommand{\appendixText}{}
\newcommand{\toappendix}[1]{#1}
\newcommand{\appmark}{$\star$}
\theoremstyle{theorem}
\newtheorem*{rep@theorem}{\rep@title}
\newcommand{\newreptheorem}[2]{%
\newenvironment{rep#1}[1]{%
 \def\rep@title{#2 \ref{##1}}%
 \begin{rep@theorem}}%
 {\end{rep@theorem}}}
\title{Scheduling Kernels via Configuration LP} 
\author{Dušan Knop}{Department of Theoretical Computer Science, Faculty of Information Technology, Czech Technical University in Prague, Czech Republic}{dusan.knop@fit.cvut.cz}{https://orcid.org/0000-0002-1825-0097}{}
\author{Martin Koutecký}{Computer Science Institute of Charles University, Charles University, Czech Republic}{koutecky@iuuk.mff.cuni.cz}{[orcid]}{}
\authorrunning{D.\ Knop and M.\ Koutecký}
\keywords{Scheduling, Kernelization}
\begin{document}

\maketitle

\begin{abstract}
  Makespan minimization (on parallel identical or unrelated machines) is arguably the most natural and studied scheduling problem.
  A common approach in practical algorithm design is to reduce the size of a given instance by a fast preprocessing step while being able to recover key information even after this reduction.
  This notion is formally studied as kernelization (or simply, kernel) -- a polynomial time procedure which yields an equivalent instance whose size is bounded in terms of some given parameter.
  It follows from known results that makespan minimization parameterized by the longest job processing time $p_{\max}$ has a kernelization yielding a reduced instance whose size is exponential in $p_{\max}$.
  Can this be reduced to polynomial in $p_{\max}$?

  We answer this affirmatively not only for makespan minimization, but also for the (more complicated) objective of minimizing the weighted sum of completion times, also in the setting of unrelated machines when the number of machine kinds is a parameter.

  Our algorithm first solves the Configuration LP and based on its solution constructs a solution of an intermediate problem, called huge $N$-fold integer programming.
  This solution is further reduced in size by a series of steps, until its encoding length is polynomial in the parameters.
  Then, we show that huge $N$-fold IP is in \NP, which implies that there is a polynomial reduction back to our scheduling problem, yielding a kernel.

  Our technique is highly novel in the context of kernelization, and our structural theorem about the Configuration LP is of independent interest.
  Moreover, we show a polynomial kernel for huge $N$-fold IP conditional on whether the so-called separation subproblem can be solved in polynomial time.
  Considering that integer programming does not admit polynomial kernels except for quite restricted cases, our ``conditional kernel'' provides new insight.
\end{abstract}

\listoftodos

\clearpage

\section{Introduction}
Kernelization, data reduction, or preprocessing: all of these refer to the goal of simplifying and reducing (the size of) the input in order to speed up computation of challenging tasks.
Many heuristic techniques are applied in practice, however, we seek a theoretical understanding in the form of procedure with guaranteed bounds on the sizes of the reduced data.
We use the notion of kernelization from parameterized complexity (cf.~\cite{Niedermeier06,CyganFKLMPPS15}), where along with an input instance $I$ we get a positive integer $k$ expressing the \emph{parameter value}, which may be the size of the sought solution or some structural limitation of the input.
A kernel is an algorithm running in time $\poly(|I|)$ which returns a reduced instance $I'$ of the same problem of size bounded in terms of $k$; we sometimes also refer to $I'$ as the kernel.

It is well known~\cite{CaiCDF97} that a problem admits a kernel if and only if it has an algorithm running in time $f(k) \poly(|I|)$ for some computable function $f$ (i.e., if it is fixed-parameter tractable, or \FPT, parameterized by $k$).
The ``catch'' is that this kernel may be very large (exponential or worse) in terms of $k$, while for many problems, kernels of size polynomial in $k$ are known.
This raises a fundamental question for any \FPT problem: does it have a polynomial kernel?
Answering this question typically provides deep insights into a problem and the structure of its solutions.

Parameterized complexity has historically focused primarily on graph problems, but it has been increasingly branching out into other areas.
Kernelization, as arguably the most important subfield of parameterized complexity (cf. a recent monograph~\cite{FominLSZ19}), follows suit.
Scheduling is a fundamental area in combinatorial optimization, with results from parameterized complexity going back to 1995~\cite{BodlaenderF95}.
Arguably the most central problem in scheduling is makespan minimization on identical machines, denoted as $P||C_{\max}$, which we shall define soon.
It took until the seminal paper of Goemans and Rothvoss~\cite{GoemansRothvoss2014} to get an \FPT algorithm for $P||C_{\max}$ parameterized by the number of job types (hence also by the largest job).
Yet, the existence of a polynomial kernel for $P||C_{\max}$ remained open, despite being raised by Mnich and Wiese~\cite{MnichW15} and reiterated by van Bevern\footnote{The question was asked at the workshop ``Scheduling \& FPT'' at the Lorentz Center, Leiden, in February 2019, as a part of the opening talk for the open problem session.}.
Here, we give an affirmative answer for this problem:
\begin{corollary}\label{cor:pcmaxkernel}
	There is a polynomial kernel for $P||C_{\max}$ when parameterized by the longest processing time $p_{\max}$.
\end{corollary}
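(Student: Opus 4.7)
The plan is to derive Corollary~\ref{cor:pcmaxkernel} as a direct instantiation of the general framework advertised in the abstract. When the parameter is $p_{\max}$, every job has processing time in $\{1,\dots,p_{\max}\}$, so the input can be compressed at once into a vector $\ven\in\N^{p_{\max}}$ giving the multiplicity of each job length, together with the number of machines $m$ and a target makespan $T$ (which, by a standard binary-search step, may be assumed to lie in a range of size $\poly(p_{\max})$ after rescaling). A feasible load on a single machine is then a \emph{configuration} -- a vector $\vecc\in\{0,\dots,p_{\max}\}^{p_{\max}}$ with $\sum_j j\cdot c_j\le T$ -- and the total number of such configurations is bounded by a function of $p_{\max}$ alone.

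First, I would solve the Configuration LP on this compressed instance. Separation for its dual is a knapsack problem with $p_{\max}$ item sizes bounded by $p_{\max}$, hence solvable in time $\poly(p_{\max})$, so the whole LP is polynomial-time solvable via the ellipsoid method. From an optimal fractional solution I would, following the construction announced in the abstract, build a feasible integer solution to a corresponding \emph{huge $N$-fold integer program} whose $m$ identical ``bricks'' encode the $m$ machines. At this stage the solution still has encoding length proportional to $m$, so it is not yet a kernel.

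The main obstacle, and the place where the paper's structural theorem about the Configuration LP must be used, is to compress this huge $N$-fold solution into one of encoding length $\poly(p_{\max})$. The goal is to bound, independently of $m$, the number of \emph{distinct} machine profiles appearing in some optimal integer solution; the identical-machine symmetry then lets us store only the multiplicities of these profiles in binary, giving a total witness length polynomial in $p_{\max}$ plus $\log m$. I expect essentially all of the technical effort to concentrate in this step, as it is here that one must exploit integrality gaps and vertex-support bounds of the Configuration LP.

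Once such a polynomial-size witness is available, the relevant decision version of huge $N$-fold IP restricted to instances arising from $P||C_{\max}$ lies in \NP, so what we have constructed is a polynomial \emph{compression} of $P||C_{\max}$ (parameterized by $p_{\max}$) into huge $N$-fold IP. Since $P||C_{\max}$ is itself in \NP, applying Proposition~\ref{prop:compressionKernelEquivalenceForNP} converts this compression into an honest polynomial kernel in the original language of $P||C_{\max}$, yielding Corollary~\ref{cor:pcmaxkernel}.
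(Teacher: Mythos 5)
Your plan follows the paper's pipeline (Configuration LP $\to$ huge $N$-fold IP $\to$ proximity-based compression $\to$ \NP-membership $\to$ Proposition~\ref{prop:compressionKernelEquivalenceForNP}) rather than taking a genuinely different route. The paper itself does not re-run that pipeline here: it dispatches Corollary~\ref{cor:pcmaxkernel} in two lines by observing that $P||C_{\max}$ is the special case of $R||C_{\max}$ with a single machine kind ($\kappa = 1$) and at most $p_{\max}$ job types ($\tau\le p_{\max}$), so the already-established Corollary~\ref{cor:rObjKernel} gives a polynomial compression of $P||C_{\max}$ into $R||C_{\max}$ of size $\poly(p_{\max})$, and Proposition~\ref{prop:compressionKernelEquivalenceForNP} turns that into a $P||C_{\max}$ kernel. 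You effectively re-derive Theorem~\ref{thm:RHighMultiplicityPolykernel} and Corollary~\ref{cor:rObjKernel} inline, which is fine as far as it goes, but makes the corollary look harder than it is relative to what is already proved.

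Two small inaccuracies in your sketch are worth correcting. First, a configuration for $P||C_{\max}$ is a vector $\vecc\in\N^{\tau}$ with $\sum_j p_j c_j \le T$; the coordinates $c_j$ are \emph{not} bounded by $p_{\max}$ when $T$ is large, so the set $\{0,\dots,p_{\max}\}^{p_{\max}}$ does not contain all configurations. The parenthetical ``by a standard binary-search step, $T$ lies in a range of size $\poly(p_{\max})$ after rescaling'' is likewise not something the paper uses and is not obviously sound for the \emph{decision} version (where $T$ is part of the input); fortunately neither claim is on the critical path, since solving~\eqref{eq:conflp} only needs the separation oracle, not an enumeration of configurations, and the proximity-based reduction controls the residual quantities directly. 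Second, the hypothesis of Proposition~\ref{prop:compressionKernelEquivalenceForNP} that you need is that $P||C_{\max}$ is \NP-\emph{hard} (and that the target, huge $N$-fold IP, is in \NP); you wrote ``$P||C_{\max}$ is in \NP'', which is true but irrelevant to invoking the proposition.
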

Let us now introduce and define the scheduling problems $P || C_{\max}$ and $P || \sum w_j C_j$.
There are $n$ \emph{jobs} and $m$ identical \emph{machines}, and the goal is to find a schedule minimizing an objective.
For each job $j \in [n]$, a \emph{processing time} $p_j \in \N$ is given and a \emph{weight} $w_j$ are given; in the case of $P || C_{\max}$ the weights play no role and can be assumed to be all zero.
A \emph{schedule} is a mapping which to each job $j \in [n]$ assigns some machine $i \in [m]$ and a closed interval of length $p_j$, such that the intervals assigned to each machine do not overlap except for their endpoints.
For each job $j \in [n]$, denote by $C_j$ its \emph{completion time}, which is the time when it finishes, i.e., the right end point of the interval assigned to $j$ in the schedule.
In the \emph{makespan minimization} ($C_{\max}$) problem, the goal is to find a schedule minimizing the time when the last job finishes $C_{\max} = \max_{j \in [n]} C_j$, called the \emph{makespan}.
In the \emph{minimization of sum of weighted completion times} ($\sum w_j C_j$), the goal is to minimize $\sum w_j C_j$.
(In the rest of the paper we formally deal with \emph{decision} versions of these problems, where the task is to decide whether there exists a schedule with objective value at most $k$.
This is a necessary approach when speaking of kernels and complexity classes like \NP and \FPT.)

In fact, our techniques imply results stronger in three ways, where we handle: \begin{enumerate\sv{*}}[label=\bfseries(\arabic*)] \item the much more complicated $\sum w_j C_j$ objective function involving possibly large job weights, \item the unrelated machines setting (denoted $R || C_{\max}$ and $R || \sum w_j C_j$), and \item allowing the number of jobs and machines to be very large, known as the \emph{high-multiplicity} setting.\end{enumerate\sv{*}}
For this, we need further notation to allow for different kinds of machines.
For each machine $i \in [m]$ and job $j \in [n]$, a \emph{processing time} $p_j^i \in \N$ is given.
For a given scheduling instance, say that two jobs $j, j' \in [n]$ are of the same \emph{type} if $p_j^i = p_{j'}^i$ for all $i \in [m]$ and $w_j = w_{j'}$, and say that two machines $i, i' \in [m]$ are of the same \emph{kind} if $p_j^i = p_j^{i'}$ for all jobs $j \in [n]$.
We denote by $\tau \in \N$ and $\kappa \in \N$ the number of job types and machine kinds, respectively, call this type of encoding the \emph{high-multiplicity} encoding, and denote the corresponding problems $R|HM|C_{\max}$ and $R|HM|\sum w_j C_j$.

Our approach is indirect: taking an instance $I$ of scheduling, we produce a small equivalent instance $I'$ of a the so-called \emph{huge $N$-fold integer programming} problem with a quadratic objective function (see more details below).
This is known as \emph{compression}, i.e., a polynomial time algorithm producing from $I$ a small equivalent instance of a different problem:
\newcommand{\RHighMultiplicityPolykernelStatement}{
	The problems $R|HM|C_{\max}$ and $R|HM|\sum w_j C_j$ parameterized by the number of job types $\tau$, the longest processing time $p_{\max}$, and the number of machine kinds $\kappa$ admit a polynomial compression to quadratic huge $N$-fold IP parameterized by the number of block types $\bar{\tau}$, the block dimension $t$, and the largest coefficient $\|E\|_\infty$.
}
\begin{theorem}
	\label{thm:RHighMultiplicityPolykernel}
	\RHighMultiplicityPolykernelStatement
\end{theorem}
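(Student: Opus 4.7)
The plan is to route the scheduling instance through its Configuration LP relaxation, extract from a fractional optimum a structural ``sparse + integral'' decomposition, and then rephrase the residual problem as a compact huge $N$-fold IP whose parameters are explicitly bounded in $\tau$, $p_{\max}$, and $\kappa$.

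First, I would set up the Configuration LP in the standard way: for each machine kind $k \in [\kappa]$, a \emph{configuration} $C \in \N^\tau$ describes how many jobs of each type run on one machine of kind $k$, subject to a simple feasibility constraint (total processing time bounded by the guessed makespan, or by a threshold derived from $p_{\max}$ in the $\sum w_j C_j$ setting after a Smith-rule preprocessing of each configuration). The LP has a variable $x_{C,k}$ per (configuration, kind) pair, with covering constraints for each job type and capacity constraints for each kind. Although the number of configurations can be exponential, it is classical that this LP can be solved in polynomial time via an appropriate knapsack-type separation oracle, and that a basic optimal solution has support of size at most $\tau + \kappa$.

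Second, I would invoke the structural theorem about the Configuration LP that the abstract advertises as ``of independent interest'' to obtain a near-optimal $x^* = x^{\mathrm{int}} + x^{\mathrm{frac}}$, where $x^{\mathrm{int}}$ is integral but supported on very few distinct configurations (each used with potentially huge multiplicity), and $x^{\mathrm{frac}}$ is a genuinely fractional residue supported on $\poly(\tau, p_{\max}, \kappa)$ configurations. The integral part is precisely the kind of high-multiplicity data that huge $N$-fold IP is designed to handle succinctly, while the fractional part is small enough to encode explicitly. Third, I would assemble the huge $N$-fold IP: each surviving configuration on each kind becomes one \emph{block type}, with a short vector of integer variables counting how many machines of that kind actually realise that configuration; the diagonal blocks enforce per-machine feasibility and the linking block enforces that every job type is covered exactly, with all entries bounded by $p_{\max}$. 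The number of block types $\bar\tau$ is the size of the configuration support times $\kappa$, the block dimension $t$ is polynomial in $\tau$, and $\|E\|_\infty \le \poly(p_{\max})$. The $\sum w_j C_j$ objective is handled by ordering each configuration by Smith's rule, so that the contribution of each selected configuration depends quadratically on the multiplicities used, which is exactly what the \emph{quadratic} huge $N$-fold IP in the statement accommodates.

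The main obstacle is clearly the second step: proving that the Configuration LP admits a decomposition into an integral high-multiplicity part plus a genuinely sparse fractional residue. A naive basic-feasible-solution argument gives support $O(\tau + \kappa)$ but not the separation of ``huge integral'' from ``small fractional'' needed for the huge $N$-fold encoding, and the bound must be polynomial in $p_{\max}$ rather than merely finite in it. I expect to establish this by exploiting that configurations on a fixed kind live in a bounded polytope in $\N^\tau$ with coordinates at most $p_{\max}$, allowing a Carathéodory- or integer-Carathéodory-style reduction of any LP solution to one with short fractional support while absorbing the rest into integer multiples of a few ``base'' configurations. A secondary subtlety, worth verifying carefully, is that Smith's rule really makes the $\sum w_j C_j$ contribution of a block depend only on the multiplicity vector of its configuration (not on any external scheduling choice), so that the objective is a genuine quadratic function of the IP variables with coefficients polynomial in $p_{\max}$ and $w_{\max}$.
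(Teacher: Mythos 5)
Your overall route---Configuration LP, then a structural ``proximity'' theorem, then repackaging as a reduced huge $N$-fold IP---is indeed the route the paper takes. The paper's Theorem~\ref{thm:hugeNFoldKernel} plays exactly the role you assign to the ``structural theorem,'' and the scheduling-to-huge-$N$-fold-IP encoding you sketch is essentially the one in Section~\ref{ssec:hugeNFoldIPModesHMScheduling}. (A minor technical difference: the paper does not establish the proximity result by an integer-Carath\'eodory argument as you propose, but by reinterpreting an existing $\ell_1$-proximity theorem for huge $N$-fold IP, stated here as Proposition~\ref{thm:old_proximity}; this avoids having to reprove a decomposition statement from scratch.)

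There is, however, a genuine gap in the last step that your sketch glosses over, and it is precisely the part you flag only as a ``secondary subtlety.'' You conclude that the $\sum w_j C_j$ objective becomes a separable quadratic with coefficients ``polynomial in $p_{\max}$ and $w_{\max}$.'' But $w_{\max}$ is \emph{not} among the parameters---the weights $w_j$ may be arbitrarily large---so an instance carrying these coefficients is not of size polynomial in $(\tau, p_{\max}, \kappa)$, and you do not actually obtain a polynomial compression. Reducing the objective is not a routine application of the Frank--Tardos theorem (Proposition~\ref{thm:FT}) either, because that theorem applies to \emph{linear} objectives over a space whose \emph{dimension} is a parameter, whereas here the objective is quadratic and lives in dimension $Nt$ with $N$ possibly exponential in the encoding. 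The paper devotes Lemma~\ref{lem:objreduction} to this: linearize the quadratic objective via auxiliary variables, aggregate variables across bricks of the same type to collapse the dimension to $2t\bar\tau$, apply Frank--Tardos there, then deaggregate and delinearize, together with an argument that any equivalent separable quadratic must have identical coefficients across same-type bricks so that the deaggregation is legitimate. A second, smaller underestimate is the claim that the Configuration LP is solvable by a ``knapsack-type separation oracle''---true for $C_{\max}$, but for $\sum w_j C_j$ the separation IP has the cumulative-load constraint matrix $\bar F^i$, and making its solution polynomial requires the nontrivial bounds $\tw_D(\bar F^i)=1$ (Lemma~\ref{lem:twD}) and $g_1(\bar F^i)\le \Oh(\tau^5)$ (the hill-cutting Lemma~\ref{lem:hillcut}) fed into Proposition~\ref{prop:fastip}.
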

If we can then find a polynomial reduction from quadratic huge $N$-fold IP to our scheduling problems, we are finished.
For this, it suffices to  show \NP membership, as we do in Lemma~\ref{lem:hugenfoldnp}.

\subparagraph{Configuration LP.} Besides giving polynomial kernels for some of the most fundamental scheduling problems, we wish to highlight the technique behind this result, because it is quite unlike most techniques used in kernelization and is of independent interest.
Our algorithm essentially works by solving the natural Configuration LP of $P || C_{\max}$ (and other problems), which can be done in polynomial time when $p_{\max}$ is polynomially bounded, and then using powerful structural insights to reduce the scheduling instance based on the Configuration LP solution.
The Configuration LP is a fundamental tool in combinatorial optimization which goes back to the work of Gilmore and Gomory in 1961~\cite{GilmoreGomory1961}.
It is known to provide high-quality results in practice\mkcom{cite}, in fact, the ``modified integer round-up property (MIRUP)'' conjecture states that the natural Bin Packing Configuration LP always attains a value which is at most one larger than the integer optimum~\cite{ScheitnerT:1995}.
The famous approximation algorithm of Karmarkar and Karp~\cite{KarmarkarKarp82} for Bin Packing is based on rounding the Configuration LP, and many other results in approximation use the Configuration LP for their respective problems as the starting point.

In spite of this centrality and vast importance of the Configuration LP, there are only few structural results providing deeper insight.
Perhaps the most notable is the work of Goemans and Rothvoss~\cite{GoemansRothvoss2014} and later Jansen and Klein~\cite{JansenK15} who show that there is a certain set of ``fundamental configurations'' such that in any integer optimum, all but few machines (bins, etc.) will use these fundamental configurations.
Our result is based around a theorem which shows a similar yet orthogonal result and can be informally stated as follows:
\begin{theorem}\label{thm:hugeNFoldKernel}
There is an optimum of the Configuration IP where all but few configuration are those discovered by the Configuration LP, and the remaining configurations are not far from those discovered by the Configuration LP.
\end{theorem}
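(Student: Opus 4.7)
The plan is to combine the bounded support of an optimal basic Configuration LP solution with an integer-programming proximity argument applied to the (exponentially wide but only $\tau$-tall) Configuration constraint matrix.

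First, I fix an optimal basic feasible solution $y^\star$ of the Configuration LP. Since the LP has $\tau$ equality constraints---one per job type---basic feasibility forces $|\suppo(y^\star)| \leq \tau$. Let $A$ denote the configuration-constraint matrix whose columns are configurations $C \in \N^\tau$ bounded entrywise by $p_{\max}$, and let $b \in \N^\tau$ be the job-multiplicity vector. Rounding down yields $\bar y \df \lfloor y^\star \rfloor$, supported inside $\suppo(y^\star)$, whose residual $r \df b - A \bar y$ satisfies $\|r\|_1 \leq \tau^2 p_{\max}$, since at most $\tau$ fractional parts, each below $1$, are scaled by columns of $\ell_1$-norm at most $\tau p_{\max}$.

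Second, I invoke an IP proximity theorem whose bound depends only on the number of rows and $\|A\|_\infty$ and not on the number of columns (which for us is exponential in $p_{\max}$). This furnishes an optimal integer solution $x^\star$ of the Configuration IP with $\|x^\star - y^\star\|_1$ bounded by some function $h(\tau, p_{\max})$. Because $y^\star_C = 0$ for every $C \notin \suppo(y^\star)$, we immediately obtain $\sum_{C \notin \suppo(y^\star)} x^\star_C \leq h(\tau, p_{\max})$, proving the ``all but few'' half: at most $h(\tau, p_{\max})$ machines of $x^\star$ deviate from LP configurations.

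Third, for the ``not far from'' half, I conformally decompose the integer kernel vector $x^\star - \bar y$ (after absorbing the rounding residual into a single correction summand) into Graver-basis elements $g_1, \ldots, g_K$ of $A$. Every positive coordinate of some $g_k$ at a fresh configuration $C \notin \suppo(y^\star)$ is cancelled, via $A g_k = 0$, by negative coordinates; pairing such a $C$ with a partner $C' \in \suppo(y^\star)$ then yields $\|C - C'\|_1 \leq \|g_k\|_1$, which is bounded in terms of $\tau$ and $p_{\max}$ alone by the Graver-complexity estimate for matrices of $\tau$ rows and entries at most $p_{\max}$.

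The main obstacle is securing column-count-independent proximity and Graver bounds: the classical proximity statements involve the number of variables, which in the Configuration IP is exponential in $p_{\max}$. I therefore rely on the more recent proximity and Graver-complexity results that depend only on the number of rows $\tau$ and $\|A\|_\infty \leq p_{\max}$. A secondary subtlety is pairing each fresh configuration with a true LP configuration rather than with another fresh one; this is handled by further refining each Graver step and invoking the bound from the second paragraph that at most $h(\tau, p_{\max})$ machines run fresh configurations at all, so the pairing graph has small side on the ``fresh'' shore.
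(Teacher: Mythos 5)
Your approach has two genuine gaps, both stemming from working at the level of the configuration-multiplicity variables rather than at the brick level, which is where the paper operates. First, the numerical premise that the configuration matrix has entries (and column $\ell_1$-norms) bounded by $p_{\max}$ is false: an entry of a configuration counts how many jobs of a type one machine receives, which can be as large as $n_j$ (think of many unit-length jobs on a single machine), not $p_{\max}$. Hence the column-count-independent proximity and Graver bounds you invoke (of Eisenbrand--Weismantel type, depending on the number of rows and $\|A\|_\infty$) only give a bound in terms of the largest configuration entry, which is not a function of $\tau$ and $p_{\max}$; your $h(\tau,p_{\max})$ and the residual bound $\tau^2 p_{\max}$ are unjustified. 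The paper circumvents exactly this by invoking the proximity theorem of Knop et al.\ for huge $N$-fold IP at the level of the matrix $E^{(N)}$, whose entries are processing times bounded by $\|E\|_\infty \le p_{\max}$, so the bound $P$ depends only on $r,s,t,\bar{\tau},\|E\|_\infty$; the statement about \eqref{eq:confilp} is then obtained by reinterpreting that brick-level $\ell_1$ bound.

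Second, the ``not far from'' half does not follow from your Graver pairing: the inequality $\|C-C'\|_1 \le \|g_k\|_1$ is a non sequitur, since $\|g_k\|_1$ measures multiplicities of configurations appearing in a kernel element, not distances between the configuration vectors themselves. A one-row example: with a single unit-length job type, the element $\veg$ with $+1$ on the configuration $1000$ and $-2$ on the configuration $500$ lies in the Graver basis and has $\|\veg\|_1=3$, yet the two configurations are at distance $500$ in configuration space. More fundamentally, proximity in the $\vey$-space of \eqref{eq:confilp} can only tell you that few machines use fresh configurations; it cannot bound how far those fresh configurations are from the LP ones. The paper gets both conclusions simultaneously because its proximity is measured in $\Z^{Nt}$: an $\ell_1$ bound of $P$ between $\varphi(\vey^*)$ and an integer optimum bounds both the number of bricks that deviate and, brick by brick, the $\ell_\infty$-distance of each deviating brick's configuration from one of the at most $2\bar{\tau}+r$ centers $\suppo(\vey^*) \cup \hat{\CC}$ (the averaged configurations $\hat{\CC}$, needed to place the fractional bricks, are also missing from your construction). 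To repair your plan you would have to replace the configuration-level proximity and Graver arguments by the brick-level proximity statement, which is essentially the paper's proof.
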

We note that our result, unlike the ones mentioned above~\cite{GoemansRothvoss2014,JansenK15}, also applies to arbitrary separable convex functions.
This has a fundamental reason: the idea behind both previous results is to shift weight from the inside of a polytope to its vertices without affecting the objective value, which only works for linear objectives.

\subparagraph{Huge $N$-fold IP.} Finally, we highlight that the engine behind our kernels, a conditional kernel for the so-called quadratic huge $N$-fold IP, is of independent interest.
Integer programming is a central problem in combinatorial optimization.
Its parameterized complexity has been recently intensely studied~\cite{EisenbrandEtAl2019,KnopKM19,KouteckyLO18,EisenbrandHK18,ChanCKKP19}.
However, it turns out that integer programs cannot be kernelized in all but the most restricted cases~\cite{Kratsch13,Kratsch16,JansenK15}.
We give a positive result about a class of block-structured succinctly encoded IPs with a quadratic objective function, so-called quadratic huge $N$-fold IPs, which was used to obtain many interesting \FPT results~\cite{KnopK:2017,KnopKM19,Bredereck0KN19,GavenciakKK18,KnopKM17,BulteauHKLV20}.
However, our result is conditional on having a polynomial algorithm for the so-called \emph{separation subproblem} of the Configuration LP of the quadratic huge $N$-fold IP, so there is a price to pay for the generality of this fragment of IP.
The separation subproblem is to optimize a certain objective function (which varies) over the set of configurations.
In the cases considered here, we show that this corresponds to (somewhat involved) variations of the knapsack problem with polynomially bounded numbers; in other problems expressible as $n$-fold IP, the separation subproblem corresponds to a known hard problem.
Informally, our result reads as follows:
\newcommand{\hugeNFoldSeparationKernelStatement}{
	If~\eqref{eq:separation} is solvable in polynomial time, then quadratic huge $N$-fold IP admits a polynomial kernel when parameterized by $(\bar{\tau}, t, \|E\|_\infty)$.
}
\begin{theorem}\label{thm:redhugeip}
If the separation subproblem can be solved in polynomial time, then quadratic huge $N$-fold IP has a polynomial kernel parameterized by the block dimensions, the number of block types, and the largest coefficient.
\end{theorem}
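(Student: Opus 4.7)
The plan is to combine a polynomial-time Configuration LP solve (enabled by the separation oracle) with the structural Theorem~\ref{thm:hugeNFoldKernel} to expose a polynomial-size ``skeleton'' inside the huge $N$-fold instance, and then reduce the high multiplicities of blocks against this skeleton. First, I would feed the separation subproblem into the ellipsoid method to compute an optimal basic feasible solution of the Configuration LP of the given quadratic huge $N$-fold IP in time polynomial in the input plus the parameters. Being basic feasible, its support has size at most the number of constraints, which depends only on $\bar{\tau}$, $t$, and $\|E\|_\infty$; this yields a polynomial list $\mathcal{C}^{LP}$ of ``LP-discovered'' configurations together with fractional multiplicities.

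Next, I would apply Theorem~\ref{thm:hugeNFoldKernel} to $\mathcal{C}^{LP}$. It guarantees an integer optimum in which all but a bounded number of blocks are assigned to configurations from $\mathcal{C}^{LP}$, and the few blocks that deviate use configurations within a bounded $\ell_\infty$-distance from $\mathcal{C}^{LP}$. Because a ball of bounded radius in $\Z^t$ contains only $(\poly(\|E\|_\infty,t))^t$ lattice points, the enlarged candidate list $\mathcal{C}^{\ast}$ obtained by taking the union of these neighbourhoods is still of size polynomial in $(\bar{\tau}, t, \|E\|_\infty)$. Hence, we may assume that any optimum integer solution uses only configurations from $\mathcal{C}^{\ast}$, and the number of blocks that deviate from the LP assignment is bounded by a function of the parameters alone.

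Finally, I would compress the huge block multiplicities. For each block type, the LP prescribes how many blocks of that type are assigned to each configuration in $\mathcal{C}^{LP}$, up to the small deviation part guaranteed by Theorem~\ref{thm:hugeNFoldKernel}; the huge bulk can be ``frozen'' by committing it to the LP pattern, which contributes a precomputable constant to the quadratic objective, while only the deviation multiplicities (polynomial in the parameters) need to be recorded in the output instance. The reduced instance encodes $E$, the costs, the (polynomially many) residual block types arising from $\mathcal{C}^{\ast}\setminus \mathcal{C}^{LP}$, and the polynomially bounded residual multiplicities; this is a quadratic huge $N$-fold IP of size polynomial in $(\bar{\tau}, t, \|E\|_\infty)$ equivalent to the input. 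The step I expect to be most delicate is the multiplicity freezing: justifying that the ``bulk'' LP commitment together with the polynomially bounded residue preserves optimality requires leaning hard on Theorem~\ref{thm:hugeNFoldKernel}, and one must carefully track how the quadratic objective decomposes across frozen and residual blocks (so that the interaction terms remain encodable), possibly by enlarging the brick set rather than shrinking it so that deviation configurations become new block types within the output instance's format.
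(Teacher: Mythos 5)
Your proposal follows the paper's skeleton faithfully: solve the Configuration LP via the separation oracle (this is Lemma~\ref{lem:frac}), invoke Theorem~\ref{thm:hugeNFoldKernel} to get the ``centers'' and bounded deviations, and then freeze the bulk $\vey_{-P}$ while creating a residual instance whose new block types correspond to neighbourhoods of the centers (expanding $\bar{\tau}$ to $\tilde\tau = \bar\tau + |\suppo(\vey^*)|$ rather than shrinking it is exactly what the paper does). The multiplicity compression, bound shifting around each $\floor{\vecc'}$, and right-hand-side adjustment you sketch match the construction of~\eqref{eq:redhugeip}.

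However, there is a genuine gap. After the reduction you describe, every component of the residual instance is polynomially bounded in $(\bar{\tau}, t, \|E\|_\infty)$ \emph{except for the objective}. The quadratic objective $f$ is separable (so there are no cross-brick interaction terms to worry about — that part of your concern is a red herring), but its coefficients can be arbitrarily large: in the scheduling applications they encode job weights and processing times, and in general they are part of the input and not a function of the parameters. Your claim that the output instance has ``size polynomial in $(\bar{\tau}, t, \|E\|_\infty)$'' therefore does not follow from the steps you give; the encoding length of the coefficients of $\bar{f}$ is uncontrolled. The paper closes this gap with Lemma~\ref{lem:objreduction}, which is a nontrivial adaptation of the Frank--Tardos coefficient-reduction theorem: because Frank--Tardos only handles a \emph{linear} objective over a \emph{low}-dimensional box, one must first linearize the quadratic objective in an extended space, aggregate the (exponentially many) brick variables per type down to $2t\bar\tau$ dimensions, apply Frank--Tardos there, and then deaggregate and delinearize back, using the structural fact that an equivalent separable quadratic must have identical coefficients across bricks of the same type. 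Without some such step your kernel is not polynomial, so the proof as written is incomplete.
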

One aspect of the algorithm above is reducing the quadratic objective function.
The standard approach, also used in kernelization of weighted problems~\cite{EtscheidKMR17,ChaplickFGK019,BentertvBFNN19,GoebbelsGRY17,BevernFT20,BevernFT19,GurskiRR19} is to use a theorem of Frank and Tardos~\cite{FT} which ``kernelizes'' a linear objective function if the dimension is a parameter.
However, we deal with \begin{enumerate\sv{*}}[label=\bfseries(\arabic*)] \item a quadratic convex (non-linear) function, \item over a space of large dimension. \end{enumerate\sv{*}}
We are able to overcome these obstacles by a series of steps which first ``linearize'' the objective, then ``aggregate'' variables of the same type, hence shrinking the dimension, then reduce the objective using the algorithm of Frank and Tardos, and then we carefully reverse this process (cf.~Lemma~\ref{lem:objreduction}).
This result has applications beyond this work: for example, the currently fastest strongly \FPT algorithm for $R||\sum w_j C_j$ (i.e., an algorithm whose number of arithmetic operations does not depend on the weights $w_j$) has dependence of $m^2 \poly\log(m)$ on the number of machines $m$; applying our new result instead of~\cite[Corollary 69]{EisenbrandEtAl2019} reduces this dependence to $m \poly\log(m)$.

\subparagraph*{Other Applications}
Theorem~\ref{thm:redhugeip} can be used to obtain kernels for other problems which can be modeled as huge $N$-fold IP.
First, we may also optimize the $\ell_p$ norms of times when each machine finishes, a problem known as $R|HM|\ell_p$.
Our results (Corollary~\ref{cor:conflp_g1_tw}) show that also in this setting the separation problem can be solved quickly.
Second, the $P||C_{\max}$ problem is identical to Bin Packing (in their decision form), so our kernel also gives a kernel for Bin Packing parameterized by the largest item size.
Moreover, also the Bin Packing with Cardinality Constraints problem has a huge $N$-fold IP model~\cite[Lemma 54]{KnopKLMO:2019} for which Corollary~\ref{cor:conflp_g1_tw} indicates that the separation subproblem can be solved quickly.
Third, Knop et al.~\cite{KnopKLMO:2019} give a huge $N$-fold IP model for the \textsc{Surfing}\mkcom{maybe make all problems textsc?} problem, in which many ``surfers'' make demands on few different ``services'' provided by few ``servers'', where each surfer may have different costs of getting a service from a server; one may think of internet streaming with different content types, providers, and pricing schemes for different customer types.
The separation problem there is polynomially solvable for an interesting reason: its constraint matrix is totally unimodular because it is the incidence matrix of the complete bipartite graph.
Thus, Theorem~\ref{thm:redhugeip} gives polynomial kernels for all of the problems above with the given parameters.

\subparagraph*{Related Work---Scheduling.}
Let us finally review related results in the intersection of parameterized complexity and scheduling.
First, up to our knowledge, to study scheduling problems from the perspective of multivariate complexity were Bodlaender and Fellows~\cite{BodlaenderF95}.
Fellows and McCartin~\cite{FellowsM03} study study scheduling on single machine of unit length jobs with (many) different release times and due dates.
Single machine scheduling where two agents compete to schedule their private jobs is investigated by Hermelin et al.~\cite{HermelinKSTW15}.
There are few other result~\cite{BevernMNW15,JansenMS17,HermelinST19,HermelinPST19} focused on identifying tractable scenarios for various scheduling paradigms (such as flow-shop scheduling or {e.g.}\ structural limitations of the job--machine assignment).

\section{Preliminaries}
We consider zero to be a natural number, i.e., $0 \in \N$.
We write vectors in boldface (e.g., $\vex, \vey$) and their entries in normal font (e.g., the $i$-th entry of a vector~$\vex$ is~$x_i$).
For positive integers $m \leq n$ we set $[m,n] \df \{m,\ldots, n\}$ and $[n] \df [1,n]$, and we extend this notation for vectors: for $\vel, \veu \in \Z^n$ with $\vel \leq \veu$, $[\vel, \veu] \df \{\vex \in \Z^n \mid \vel \leq \vex \leq \veu\}$ (where we compare component-wise).
For two vectors $\vex, \vey \in \R^n$, $\vez = \max\{\vex, \vey\}$ is defined coordinate-wise, i.e., $z_i = \max{x_i, y_i}$ for all $i \in [n]$, and similarly for $\min\{\vex, \vey\}$.

If~$A$ is a matrix, $A_{i,j}$ denotes the $j$-th coordinate of the $i$-th row, $A_{i, \bullet}$ denotes the $i$-th row and $A_{\bullet, j}$ denotes the $j$-th column.
We use $\log \df \log_2$, i.e., all our logarithms are base~$2$.
For an integer $a \in \Z$, we denote by $\la a \ra \df 1 + \ceil{\log (|a| + 1)}$ the binary encoding length of $a$; we extend this notation to vectors, matrices, and tuples of these objects.
For example, $\la A, \veb \ra = \la A \ra + \la \veb \ra$, and $\la A \ra = \sum_{i,j} \la A_{i,j} \ra$.
For a function $f\colon \Z^n \to \Z$ and two vectors $\vel, \veu \in \Z^n$, we define $f_{\max}^{[\vel, \veu]} \df \max_{\vex \in [\vel, \veu]} |f(\vex)|$; if $[\vel, \veu]$ is clear from the context we omit it and write just $f_{\max}$.

\toappendix{
  \lv{\subsection{Kernel and Compression}}\sv{\section{Kernel and Compression}}
  Let $(Q, \kappa)$ be a parameterized problem.
  We say that $(Q, \kappa)$ is \emph{fized-parameter tractable} (or in \textsf{FPT} for short) if there exists an algorithm that given an instance $(x, k)$ decides whether $(x,k) \in (Q, \kappa)$ in $f(k) \cdot \poly(|x|)$ time, where $f \colon \mathbb{N} \to \mathbb{N}$ is a computable function.
  A \emph{kernel} for $(Q, \kappa)$ is a polynomial time algorithm (that is, an algorithm that stops in $\poly(|x|)$ time) that given an instance $(x, k)$ returns an equivalent instance $(x', k')$ (that is, $(x', k') \in (Q,\kappa)$ if and only if $(x, k) \in (Q, \kappa)$) for which both $|x'|$ and $k'$ are upper-bounded by $g(k)$ for some computable function $g \colon \mathbb{N} \to \mathbb{N}$.
  It is well-known that a parameterized problem is in \textsf{FPT} if and only if there is a kernel for it.
  Of course, the smaller the size of the instance returned by the kernelization algorithm the better; in particular, we are interested in deciding whether $g$ can be a polynomial in $k$ and if this is the case, we say there is a polynomial kernel for $(Q, \kappa)$.
  A \emph{compression} is a similar notion to kernel, that is, it is a polynomial time algorithm that given $(x,k)$ returns an instance $(y, \ell)$ with $|y|, \ell \le g(k)$, however, this time we allow $(y, \ell)$ to be an instance of a different parameterized problem $(R, \lambda)$ and we require $(y, \ell) \in (R, \lambda)$ if and only if $(x, k) \in (Q, \kappa)$.
  A problem $(Q, \kappa)$ admits a polynomial compression if the function $g$ is a polynomial and we say that the problem $(Q, \kappa)$ admits a polynomial compression into the problem~$(R, \lambda)$.
}

\begin{proposition}[{\cite[Theorem~1.6]{FominLSZ19}}]\label{prop:compressionKernelEquivalenceForNP}
  Let $(Q, \kappa), (R, \lambda)$ be parameterized problems such that $Q$ is \textsf{NP}-hard and $R$ is in~\textsf{NP}.
  If $(Q, \kappa)$ admits a polynomial compression into $(R, \lambda)$, then it admits a polynomial kernel.
\end{proposition}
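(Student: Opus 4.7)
The plan is a straightforward chain: compress, then invoke the Cook--Karp reduction between the two unparameterized decision problems underlying $(Q,\kappa)$ and $(R,\lambda)$.

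First, apply the assumed polynomial compression to the input $(x,k)$ of $(Q,\kappa)$. In polynomial time this produces an instance $(y,\ell)$ of $(R,\lambda)$ such that $(x,k)\in (Q,\kappa)$ iff $(y,\ell)\in (R,\lambda)$, and with $|y|,\ell\le g(k)$ for some polynomial $g$. Now view the parameterized problem $(R,\lambda)$ simply as the language $R\subseteq \Sigma^*\times\N$; membership of $(y,\ell)$ in $(R,\lambda)$ is exactly membership of the encoding of $(y,\ell)$ in this language. Since $R\in\mathsf{NP}$ and $Q$ is $\mathsf{NP}$-hard, there exists a (classical, polynomial-time, many-one) Karp reduction $f\colon \Sigma^*\to\Sigma^*$ from $R$ to $Q$.

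Next, apply $f$ to the encoding of $(y,\ell)$, obtaining a string $x'$ in polynomial time. By the size guarantee of the compression, the input to $f$ has length at most $\mathrm{poly}(g(k))$, and since $f$ runs in polynomial time its output satisfies $|x'|\le \mathrm{poly}(g(k))$, i.e., polynomial in $k$. Set $k'\df |x'|$ (any polynomial upper bound on $|x'|$ in terms of $k$ would do), giving an instance $(x',k')$ of $(Q,\kappa)$ whose size is polynomial in~$k$. Correctness is just composition of the two equivalences:
\[
(x,k)\in (Q,\kappa)\ \Longleftrightarrow\ (y,\ell)\in (R,\lambda)\ \Longleftrightarrow\ x'\in Q\ \Longleftrightarrow\ (x',k')\in (Q,\kappa).
\]
The overall running time is polynomial since the compression and $f$ are both polynomial.

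There is no serious obstacle: the only subtlety is to be careful that ``$(R,\lambda)$ is in $\mathsf{NP}$'' is used purely at the level of the underlying language, so that an unparameterized Karp reduction applies, and that the parameter $k'$ of the output instance can simply be chosen as the instance size (or any trivial polynomial bound), since a polynomial kernel requires only that both $|x'|$ and $k'$ are polynomial in $k$.
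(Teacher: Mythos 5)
Your proof matches the paper's one-line argument exactly: pipeline the polynomial compression with a classical Karp reduction from $R$ to $Q$, which exists because $R\in\mathsf{NP}$ and $Q$ is $\mathsf{NP}$-hard. The one place you are a bit loose is in ``setting $k' := |x'|$'': in the standard $\Sigma^*\times\N$ formalization of parameterized problems, whether $(x',k')\in(Q,\kappa)$ genuinely depends on $k'$, so you cannot choose it freely and then assert $x'\in Q \Leftrightarrow (x',k')\in(Q,\kappa)$; the clean fix is to run the Karp reduction between the \emph{unparameterized} versions $\{x\#1^k : (x,k)\in Q\}$ and $\{y\#1^\ell : (y,\ell)\in R\}$, so that the reduction's output already encodes a pair $(x',k')$ with $k'$ in unary, whence $k'\leq\mathrm{poly}(g(k))$ follows from the reduction's polynomial running time rather than by fiat. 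This is an inessential bookkeeping point and your approach is the paper's.
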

The above observation is useful when dealing with \textsf{NP}-hard problems.
The proof simply follows by pipelining the assumed polynomial compression with a polynomial time (Karp) reduction from $R$ to $Q$.

\subsection{Scheduling Notation}
Overloading the convention slightly, for each $i \in [\kappa]$ and $j \in [\tau]$, denote by $p_j^i$ the processing time of a job of type $j$ on a machine of kind $i$, by $w_j$ the weight of a job of type $j$, by $n_j$ the number of jobs of type $j$, by $m_i$ the number of machines of kind $i$, and denote $\ven = \left(n_1, \dots, n_\tau\right)$, $\vem = \left(m_1, \dots, m_\kappa\right)$, $\vep = \left(p^1_1, \dots, p^1_\tau, p^2_1, \dots, p_\tau^\kappa\right)$, $\vew \df \left(w_1, \dots, w_\tau\right)$, $p_{\max} \df \|\vep\|_\infty$, and $w_{\max} \df \|\vew\|_\infty$.
We denote the high multiplicity versions of the previously defined problems $R | HM | C_{\max}$ and $R | HM | \sum w_j C_j$.

For an instance $I$ of $R||C_{\max}$ or $R||\sum w_j C_j$, we define its size as $\la I \ra \df \sum_{i = 1}^\kappa \sum_{j = 1}^\tau \la p_j^i, w_j \ra$, whereas for an instance $I$ of $P|HM|C_{\max}$ or $P|HM|\sum w_j C_j$ we define its size as $\la I \ra = \la \ven, \vem, \vep, \vew \ra$.
Note that the difference in encoding actually leads to different problems: for example, an instance of $R | HM | C_{\max}$ with $2^k$ jobs with maximum processing time $p_{\max}$ can be encoded with $\Oh(k \tau \kappa \log p_{\max})$ bits while an equivalent instance of $R || C_{\max}$ needs $\Omega(2^k \log p_{\max})$ bits, which is exponentially more if $\tau, \kappa \in k^{\Oh(1)}$.
The membership of high-multiplicity scheduling problems in \NP was open for some time, because it is not obvious whether a compactly encoded instance also has an optimal solution with a compact encoding.
This question was considered by Eisenbrand and Shmonin, and we shall use their result.
For a set $X \subseteq \Z^d$ define the \emph{integer cone of $X$}, denoted $\intcone(X)$, to be the set
\(
\intcone(X) \df \left\{ \sum_{\vex \in X} \lambda_{\vex} \vex \mid \lambda \in \N^{X} \right\} \,.
\)
\begin{proposition}[{Eisenbrand and Shmonin~\cite[Theorem 2]{EisenbrandS:2006}}] \label{prop:es}
	Let $X \subseteq \Z^d$ be a finite set of integer vectors and let $\veb \in \intcone(X)$.
	Then there exists a subset $\tilde{X} \subseteq X$ such that $\veb \in \intcone(\tilde{X})$ and the following holds for the cardinality of $\tilde{X}$:
	\begin{enumerate}
		\item if all vectors of $X$ are nonnegative, then $|\tilde{X}| \leq \la \veb \ra$,
		\item if $M = \max_{\vex \in X} \|\vex\|_\infty$, then $|\tilde{X}| \leq 2d(\log 4dM)$.
	\end{enumerate}
\end{proposition}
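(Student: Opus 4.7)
The plan is to prove the two bounds by bit-scaling (binary decomposition) arguments, treating the nonnegative case and the signed case separately because nonnegativity allows coordinate-wise monotonicity while in the signed case cancellations force us to invoke the fractional Carathéodory theorem.

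For part (1), I would start with any representation $\veb = \sum_{\vex \in \tilde X} \lambda_\vex \vex$ with coefficients $\lambda_\vex \in \N_{>0}$ that is \emph{minimal} in the lexicographic order on $(|\tilde X|, \sum_\vex \lambda_\vex)$. Writing each multiplier in binary, $\lambda_\vex = \sum_{k=0}^{K} \mu_{\vex,k} 2^k$ with $\mu_{\vex,k} \in \{0,1\}$, yields a layered decomposition
\[
\veb \;=\; \sum_{k=0}^{K} 2^k \ves_k, \qquad \ves_k \df \sum_{\vex \in \tilde X} \mu_{\vex,k} \vex \;\in\; \intcone(\tilde X).
\]
Because all $\vex$ are nonnegative, $2^k \ves_k \leq \veb$ coordinatewise, so only $k \leq \log(\|\veb\|_\infty + 1)$ layers can be nonzero, and for each coordinate $i$ the bit $k$ can ``switch on'' at most $\la b_i \ra$ times across all layers. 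A telescoping count of these bit-events over all coordinates upper bounds the number of distinct vectors participating in the decomposition by $\sum_i \la b_i \ra = \la \veb \ra$. An exchange argument, using lex-minimality of the chosen representation, shows that no $\vex \in \tilde X$ can fail to contribute to at least one layer, so $|\tilde X| \leq \la \veb \ra$.

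For part (2) the sign constraint is lost, so I would replace the coordinatewise argument by an iterated application of the \emph{fractional} Carathéodory theorem. The idea is to maintain a residual $\veb^{(0)} = \veb$, $\veb^{(1)}, \veb^{(2)}, \dots$ together with a growing subset $\tilde X \subseteq X$ such that $\veb - \veb^{(t)} \in \intcone(\tilde X)$ and $\|\veb^{(t)}\|_\infty \leq \|\veb^{(t-1)}\|_\infty / 2$. At each stage, take any rational conic representation of $\veb^{(t)}$ in $X$, invoke the classical Carathéodory theorem to reduce its support to at most $d$ vectors, round the fractional multipliers down to nearest integers, and absorb the rounded part into $\tilde X$; the residual lies in a polytope of radius $\Oh(dM)$ which, by a halving step on the largest coordinate, can be made to shrink geometrically. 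After $\Oh(\log(dM))$ stages the residual is the zero vector and $|\tilde X|$ has grown by at most $d$ per stage, giving $|\tilde X| \leq 2d \log(4dM)$ once the constants are tuned.

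The principal obstacle is the rounding control in part (2): after a Carathéodory reduction, the rounded residual must indeed shrink by a constant factor in $\ell_\infty$ norm, rather than merely stay bounded, otherwise the recursion does not terminate in logarithmically many stages. I would handle this by choosing, at each stage, the scaling factor so that the fractional parts of the Carathéodory multipliers contribute a vector of $\ell_\infty$ norm at most $\|\veb^{(t)}\|_\infty / 2$; this is possible because one can rescale the Carathéodory representation of $\veb^{(t)}$ by a factor chosen to align a large coordinate with a power of two before rounding, at the cost of only the constant $4$ inside the $\log(4dM)$ of the statement.
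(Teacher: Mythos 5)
Both parts of your proposal diverge from the actual Eisenbrand--Shmonin argument, and both contain genuine gaps that the actual argument avoids by taking a single uniform route: a pigeonhole argument on subset sums followed by an exchange step. Concretely, suppose $\veb = \sum_{\vex\in\tilde X}\lambda_\vex\vex$ with all $\lambda_\vex\ge 1$ and $|\tilde X|$ minimal, and consider the $2^{|\tilde X|}$ sums $\sum_{\vex\in S}\vex$ over subsets $S\subseteq\tilde X$. In the nonnegative case each such sum lies coordinate-wise in $\prod_i[0,b_i]$, so there are at most $\prod_i(b_i+1)$ of them; in the signed case each lies in $[-|\tilde X|M,\,|\tilde X|M]^d$, giving at most $(2|\tilde X|M+1)^d$ values. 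If $2^{|\tilde X|}$ exceeds the relevant count, two disjoint subsets $S_1\neq S_2$ have equal sums, and subtracting $\min_{\vex\in S_1}\lambda_\vex$ from the $S_1$-coefficients while adding it to the $S_2$-coefficients kills some $\lambda_\vex$ while keeping all coefficients nonnegative, contradicting minimality. Solving $2^{|\tilde X|}\le\prod_i(b_i+1)$ and $2^{|\tilde X|}\le(2|\tilde X|M+1)^d$ for $|\tilde X|$ gives the two stated bounds.

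In your Part (1), the bit-layering decomposition $\veb=\sum_k 2^k\ves_k$ does not bound $|\tilde X|$: several distinct $\vex\in\tilde X$ may share the same lowest set bit of $\lambda_\vex$ and thus land in the same layer $\ves_k$, so ``number of layers'' is not ``number of generators''. The phrase ``telescoping count of bit-events'' is not a proof step I can reconstruct as an actual bound; the object you need to count is subset sums, not bit flips. In your Part (2) the flaw is more structural: after applying Carathéodory to get $\veb^{(t)}=\sum_{j\le d}\mu_j\vex_j$ with rational $\mu_j$ and rounding down, the residual $\sum_j\{\mu_j\}\vex_j$ is an integer vector lying in the rational cone of $X$, but it need not lie in $\intcone(X)$, so the recursion is not well-founded. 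A one-dimensional example already breaks it: $X=\{3,5\}$, $\veb=8=3+5\in\intcone(X)$; Carathéodory yields $8=\tfrac{8}{3}\cdot 3$, rounding leaves residual $2\notin\intcone(X)$. The rescaling-to-power-of-two trick you propose addresses the rate of shrinkage, which you correctly flag as delicate, but it does nothing for this membership failure, which is the more fundamental obstruction.
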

One can use Proposition~\ref{prop:es} to show that the decision versionf of $R |HM | C_{\max}$ and $R |HM | \sum w_j C_j$ have short certificates and thus belong to \NP.
We will later derive the same result as a corollary of the fact that both of these scheduling problems can be encoded as a certain form of integer programming, which we will show to have short certificates as well.

\subsection{Conformal Order and Graver Basis}
Let $\veg, \veh \in \mathbb{Z}^n$ be two vectors.
We say that \emph{$\veg$ is conformal to $\veh$} (we denote it $\veg \sqsubseteq \veh$) if both $g_i \cdot h_i \ge 0$ and $|g_i| \le |h_i|$ for all $i \in [n]$.
In other words, $\veg \sqsubseteq \veh$ if they are in the same orthant (the first condition holds) and $\veg$ is component-wise smaller than~$\veh$.
For a matrix~$A$ we define its \emph{Graver basis}~$\G(A)$ to be the set of all $\sqsubseteq$-minimal vectors in $\Ker(A) \setminus \{ \vezero \}$.
We define \( g_\infty(A) = \max \left\{ \left\| \veg \right\|_\infty \mid \veg \in \G(A) \right\} \) and \( g_1(A) = \max \left\{ \left\| \veg \right\|_1 \mid \veg \in \G(A) \right\} \).

We say that two functions $f,g \colon \Z^d \to \Z$ are \emph{equivalent} on a polyhedron $P \subseteq \Z^d$ if $f(\vex) \leq f(\vey)$ if and only if $g(\vex) \leq g(\vey)$ for all $\vex, \vey \in P$.
Note that if~$f$ and~$g$ are equivalent on~$P$, then the set of minimizers of~$f(\vex)$ over~$P$ is the same as the set of minimizers of~$g(\vex)$ over~$P$.
\begin{proposition}[{Frank and Tardos~\cite{FT}}]\label{thm:FT}
	Given a rational vector $\vew \in \Q^{d}$ and an integer $M$, there is a polynomial algorithm which finds a $\tilde{\vew} \in \Z^d$ such that the linear functions $\vew \vex$ and $\tilde{\vew}\vex$ are equivalent on $[-M,M]^d$, and $\|\tilde{\vew}\|_\infty \leq 2^{\Oh(d^3)} M^{\Oh(d^2)}$.
\end{proposition}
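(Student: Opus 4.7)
The plan is to reduce the problem to simultaneous Diophantine approximation, which is solvable in polynomial time via LLL lattice basis reduction. The key algorithmic fact I would invoke is: given rationals $\alpha_1,\ldots,\alpha_d$ and $\epsilon>0$, one can in polynomial time compute a positive integer $q\le 2^{O(d^2)}\epsilon^{-d}$ and integers $p_1,\ldots,p_d$ with $|q\alpha_i-p_i|\le \epsilon$ for every $i$. The goal is then to turn such an approximation of $\vew$ into an ordering-preserving replacement $\tilde{\vew}$.

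First I would normalize $\vew$ so that $\|\vew\|_\infty\le 1$; scaling by a positive constant changes neither the induced ordering on $[-M,M]^d$ nor the existence of a suitable integer approximation, so this is free. Next I would reformulate equivalence: $\vew\vex\le \vew\vey$ iff $\vew\vez\le 0$ for $\vez:=\vex-\vey\in[-2M,2M]^d$, so it suffices to preserve the sign of $\vew\vez$ for every such $\vez$. Apply simultaneous Diophantine approximation to $\vew$ with precision $\epsilon$ (to be chosen), yielding $(q,\vep)$; for any $\vez$ in the box the error is controlled by $|\vep\vez - q\vew\vez|\le 2dM\epsilon$, so the candidate $\tilde{\vew}=\vep$ correctly reproduces the sign of $\vew\vez$ whenever $|q\vew\vez|$ exceeds this slack.

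The main obstacle, and the technical heart of the proof, is handling the ``close calls'': configurations where $\vew\vez\neq 0$ but $|\vew\vez|$ is so small that the Diophantine slack $2dM\epsilon$ swamps it. The remedy is to iterate. After producing $(q,\vep)$, form the residual $\vew^{(1)}:=\vew-\vep/q$, whose $\ell_\infty$-norm has shrunk by a factor of roughly $\epsilon$, and recursively approximate $\vew^{(1)}$ to obtain $(q_1,\vep_1)$, and so on. The output would then be built as a scaled sum
\[
\tilde{\vew} := C_0\vep + C_1\vep_1 + C_2\vep_2 + \cdots,
\]
where the multipliers $C_i$ are chosen large enough so that the earlier (larger) term $\vep_i\vez$ determines the sign when $\vew\vez$ is not too small, while the next term $\vep_{i+1}\vez$ takes over in the finer regime; because each residual has strictly smaller norm and lies in a lattice of bounded denominator, a potential argument shows the process terminates after $O(d)$ rounds with residual $\vezero$.

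Finally, I would track magnitudes through the iteration: each round contributes a factor of $2^{O(d^2)}M^{O(d)}$ to the coefficients (from the Diophantine bound on $q$ and the chosen slack $\epsilon=\Theta(1/(dM))$), and the number of rounds is $O(d)$, which multiplies out to the claimed $\|\tilde{\vew}\|_\infty \le 2^{O(d^3)}M^{O(d^2)}$. Polynomial running time is inherited from the polynomial running time of LLL at each round together with the $O(d)$ bound on iterations.
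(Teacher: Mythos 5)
The paper does not prove this proposition; it imports it verbatim from Frank and Tardos, so your sketch has to be measured against their argument, and its skeleton is indeed exactly theirs: normalize $\vew$, reduce equivalence on the box to preserving the sign of $\vew\vez$ for $\vez\in[-2M,2M]^d$, apply LLL-based simultaneous Diophantine approximation with slack $\epsilon=\Theta(1/(dM))$ so that a nonzero $\vep\vez$ already has the correct sign, recurse on the residual to handle the close calls, and output a lexicographic combination $\sum_i C_i\vep_i$ with rapidly decreasing influence of later rounds. The per-round factor $2^{\Oh(d^2)}M^{\Oh(d)}$ and the way the factors multiply to the stated bound are also the standard accounting.

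The genuine gap is the termination step, and it is not a cosmetic one: the entire point of the bound $\|\tilde{\vew}\|_\infty\le 2^{\Oh(d^3)}M^{\Oh(d^2)}$ is that it is independent of the encoding length of $\vew$, which forces the number of rounds to be bounded by a function of $d$ alone. Your justification --- that each residual ``has strictly smaller norm and lies in a lattice of bounded denominator'' --- does not deliver this: the residual $\vew-\vep/q$ has denominators that grow with those of $\vew$ (multiplied by $q$), so there is no lattice of denominator bounded in terms of $d$ and $M$, and norm shrinkage alone never forces the residual to reach $\vezero$, let alone within $O(d)$ rounds. The correct argument is coordinate-wise: rescale each residual $\veu$ so that $\|\veu\|_\infty=1$; at a coordinate $j$ with $u_j=\pm1$ the number $qu_j$ is an integer, and since $|qu_j-p_j|\le\epsilon<1$ the new residual is \emph{exactly} zero there; moreover a zero coordinate stays zero in every later round (its approximation must again be $0$). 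Hence each round permanently zeroes at least one fresh coordinate, the process stops after at most $d$ rounds, and only then do the $O(d)$-fold product of per-round factors and the claimed magnitude bound follow. With that replacement your proof closes; as written, the bound on $\|\tilde{\vew}\|_\infty$ is not established.
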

The \emph{dual graph} $G_D(A) = (V,E)$ of a matrix $A \in \Z^{m \times n}$ has $V=[m]$ and $\{i,j\} \in E$ if rows $i$ and $j$ contain a non-zero at a common coordinate $k \in [n]$.
The dual treewidth $\tw_D(A)$ of $A$ is $\tw(G_D(A))$.
We do not define treewidth here, but we point out that $\tw(T) = 1$ for every tree~$T$.
\begin{proposition}[Eisenbrand et al.~{\cite[Theorem 98]{EisenbrandEtAl2019}}] \label{prop:fastip}
	An IP with a constraint matrix $A$ can be solved in time $(\|A\|_\infty g_1(A))^{\Oh(tw_D(A))} \poly(n, L)$, where $n$ is the dimension of the IP and $L$ is the length of the input.
\end{proposition}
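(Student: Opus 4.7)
The plan is to combine the Graver-basis augmentation framework with a dynamic program over a tree decomposition of the dual graph $G_D(A)$. First, recall the classical fact that an IP $\min\{f(\vex) \mid A\vex = \veb,\ \vel \leq \vex \leq \veu\}$ with a feasible start $\vex_0$ can be optimized by a polynomial (in $n$, $L$, and the objective bit-length) number of ``Graver-best'' augmentation steps, where each step moves from $\vex$ to $\vex + \alpha\veg$ with $\veg \in \G(A)$, $\alpha \in \N$, and the pair $(\veg, \alpha)$ is chosen to maximize the improvement $f(\vex) - f(\vex + \alpha\veg)$ subject to the box constraints. This reduces the proposition to designing a polynomial-time augmentation oracle whose running time is $(\|A\|_\infty g_1(A))^{\Oh(\tw_D(A))} \poly(n, L)$.

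To implement the oracle I would express it as the optimization of a separable objective over $\veg \in \Ker(A)$ with $\|\veg\|_1 \leq g_1(A)$ (any Graver step has this bound by definition of $g_1(A)$) and $\vel - \vex \leq \alpha\veg \leq \veu - \vex$, and solve it by dynamic programming on a nice tree decomposition of $G_D(A)$ of width $\tw_D(A)$. The key structural observation is that for each variable $x_j$, its support $S_j = \{i : A_{i,j} \neq 0\}$ is a clique in $G_D(A)$ by definition, hence contained in some bag $B(j)$; assign $x_j$ to that bag. Processing bags bottom-up, the DP state for a bag $B$ records the partial row contributions $\veb_B \in \Z^B$, where $(\veb_B)_i = \sum_{j \text{ forgotten}} A_{i,j} g_j$ is the sum of $A\veg$ contributions from variables already forgotten from the decomposition.

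Because $\|\veg\|_1 \leq g_1(A)$ and each entry of $A$ is at most $\|A\|_\infty$ in absolute value, each coordinate of $\veb_B$ lies in $[-\|A\|_\infty g_1(A),\, \|A\|_\infty g_1(A)]$, giving at most $(2\|A\|_\infty g_1(A) + 1)^{\tw_D(A)+1}$ states per bag. At a leaf or introduce node we enumerate the small set of possible values for the newly introduced variable (also bounded by $g_1(A)$); at a forget node we drop the coordinate corresponding to the forgotten row after verifying its partial contribution equals zero (since $A\veg = \vezero$); at a join node we combine the two partial contributions by addition. Together with the $n$ bags in the decomposition and one extra coordinate in the state to track the accumulated $\|\veg\|_1$ (which is absorbed into the exponent since $g_1(A) \leq \|A\|_\infty g_1(A)$), the overall per-oracle-call time is $(\|A\|_\infty g_1(A))^{\Oh(\tw_D(A))} \poly(n, L)$, and multiplying by the polynomial number of augmentation steps yields the claim.

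The main obstacle I anticipate is the bookkeeping inside the DP: one must simultaneously track the partial constraint vector, the $\ell_1$ budget, and the running objective improvement, while making sure that the final $\veg$ returned is genuinely in $\G(A)$ (i.e.\ $\sqsubseteq$-minimal in $\Ker(A)\setminus\{\vezero\}$) rather than merely a non-zero kernel vector of bounded norm. The standard way around the minimality issue is that it is not required to return a Graver element: taking any sign-compatible $\|\cdot\|_1$-bounded step suffices for the augmentation argument, and best-step improvement can only be better than what a Graver element would give. Getting the separable objective into the DP without inflating the state beyond a single extra coordinate (by evaluating $f$ incrementally as variables are introduced) is the last technical wrinkle, but does not affect the stated complexity.
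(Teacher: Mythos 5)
The paper does not prove this statement: it is imported verbatim as Theorem~98 of Eisenbrand et al.~\cite{EisenbrandEtAl2019}, so there is no internal proof to compare your attempt against.

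That said, your sketch is a fair high-level account of how the cited theorem is actually established: iterative (Graver-style) augmentation combined with a dynamic program over a tree decomposition of the dual graph $G_D(A)$, with DP states recording partial row sums, each coordinate bounded in absolute value by $\|A\|_\infty g_1(A)$. Two places where your account is imprecise and would need to be tightened. First, the step length $\alpha$ cannot be folded into the single DP you describe: the DP bounds $\|\veg\|_1 \le g_1(A)$, but box feasibility involves $\alpha\veg$, whose $\ell_1$-norm is \emph{not} bounded by $g_1(A)$; the ``one extra coordinate to track the accumulated $\|\veg\|_1$'' does not help because it is $\alpha\veg$, not $\veg$, that must lie in $[\vel-\vex,\veu-\vex]$. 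The standard fix is to fix $\alpha$ outside the DP and enumerate it over a geometric sequence up to $\|\veu-\vel\|_\infty$, giving only polynomially many DP calls and preserving an approximate Graver-best guarantee, which suffices for the augmentation framework. Second, a smaller slip: in a nice tree decomposition of the \emph{dual} graph, introduce and forget nodes manipulate rows (constraints), not columns (variables); each column $j$ is assigned to a fixed bag containing the clique on its support and its value is enumerated when that bag is processed, while the constraint $(A\veg)_i = 0$ is verified at the node where row $i$ is forgotten. Neither point changes the stated running time, but both are needed for the argument to close.
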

\begin{proposition}[Eisenbrand et al.~{\cite[Lemma 25]{EisenbrandEtAl2019}}] \label{prop:basebound}
	For an integer matrix $A \in \Z^{m \times n}$, we have $g_1(A) \leq (2\|A\|_\infty m +1)^m$.
\end{proposition}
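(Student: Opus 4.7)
The plan is to combine a conformal expansion of a Graver basis element with the Steinitz rearrangement lemma and a pigeonhole argument on prefix sums. Pick any $\veg \in \G(A)$ and let $N \df \|\veg\|_1$. For each coordinate $k \in [n]$, take $|g_k|$ copies of the signed unit vector $\sgn(g_k)\ve e_k$; this yields a multiset of $N$ vectors in $\Z^n$ whose (unordered) sum is exactly $\veg$. Applying $A$ to each member of the multiset produces $N$ integer vectors $\vev_1,\dots,\vev_N \in \Z^m$, each with $\|\vev_i\|_\infty \le \|A\|_\infty$, and whose sum equals $A\veg = \vezero$.

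Next, I would invoke the Steinitz rearrangement lemma in the $\ell_\infty$-norm: there exists a permutation $\pi$ of $[N]$ such that every prefix sum $\vep_t \df \sum_{i \le t} \vev_{\pi(i)}$ satisfies $\|\vep_t\|_\infty \le m\|A\|_\infty$. Consequently the $N+1$ prefix sums $\vep_0 = \vezero, \vep_1, \dots, \vep_N = \vezero$ all lie in the integer box $[-m\|A\|_\infty, m\|A\|_\infty]^m$, which contains exactly $(2m\|A\|_\infty+1)^m$ lattice points. Suppose for contradiction that two prefix sums coincided, say $\vep_i = \vep_j$ with $i < j$. Then the corresponding partial sum of the underlying signed unit vectors in $\Z^n$ is a nonzero element $\veh \in \Ker A$. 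By construction these unit vectors conform to the sign pattern of $\veg$ at each coordinate, so $\veh \sqsubseteq \veg$; since a strict subset of the $N$ unit vectors is used, $\veh \ne \veg$. This contradicts $\sqsubseteq$-minimality of $\veg$ in $\G(A)$. Hence the $N+1$ prefix sums are pairwise distinct, so $N+1 \le (2m\|A\|_\infty+1)^m$, which gives the bound.

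The main obstacle is producing the required $\ell_\infty$ Steinitz reordering with prefix-sum bound $m\|A\|_\infty$. This is classical (in the style of Grinberg--Sevastyanov): $\pi$ is built greedily, and whenever the greedy step seems stuck, a coordinate-wise averaging argument applied to the vectors that remain to be added (they always sum to the negative of the current prefix) produces a choice that keeps the next prefix inside the box. Any Steinitz-type bound of the form $\poly(m)\|A\|_\infty$ would still yield a bound of the same shape $\bigl(\poly(m)\|A\|_\infty\bigr)^m$; the tighter $m\|A\|_\infty$ matches the clean statement of the proposition, and apart from this lemma the argument is elementary pigeonhole bookkeeping.
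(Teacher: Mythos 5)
The paper cites this proposition from Eisenbrand et al.~without proof, and your Steinitz-plus-pigeonhole argument is precisely the proof in that reference, so the approach matches. There is, however, a small slip at the end: you claim that all $N+1$ prefix sums $\vep_0, \vep_1, \dots, \vep_N$ are pairwise distinct and conclude $N+1 \le (2m\|A\|_\infty+1)^m$, but by construction $\vep_0 = \vep_N = \vezero$, so they cannot all be distinct (your own contradiction argument only rules out coincidences coming from a \emph{strict} subset of the unit vectors, which excludes the pair $(0,N)$). The correct statement is that $\vep_0, \dots, \vep_{N-1}$ are pairwise distinct: a coincidence $\vep_i = \vep_j$ with $0 \le i < j \le N-1$ produces a nonempty proper sub-multiset of the signed unit vectors summing to some $\veh$ with $A\veh = \vezero$, $\veh \sqsubseteq \veg$, $\veh \ne \vezero$ (no cancellation within a coordinate, as you observe), and $\veh \ne \veg$, contradicting $\sqsubseteq$-minimality; this yields $N = \|\veg\|_1 \le (2m\|A\|_\infty+1)^m$, which is exactly the bound claimed. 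The rest — the conformal decomposition into signed unit vectors, the Steinitz rearrangement with $\ell_\infty$ constant $m$, and the lattice-point count of the box — is correct.
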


\lv{
Let us use Proposition~\ref{prop:es} to show that $R |HM | C_{\max}$ and $R |HM | \sum w_j C_j$ have short certificates.
Here and later we will use the notion of a configuration: a \emph{configuration} is a vector $\vecc \in \N^\tau$ encoding how many jobs of which type are assigned to some machine.
	\begin{lemma}\label{lem:npmembership}
		(The decision versions of) $R |HM | C_{\max}$ and $R |HM | \sum w_j C_j$ belong to \NP.
	\end{lemma}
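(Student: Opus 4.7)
The plan is to exploit Proposition~\ref{prop:es} so that an exponentially large schedule can be witnessed by a polynomial-size certificate. A schedule is determined by the assignment of jobs to machines -- the optimal ordering within a single machine is either irrelevant (for $C_{\max}$) or given by Smith's WSPT rule (for $\sum w_j C_j$) -- so it can be encoded as a collection of multisets, one per machine kind $i \in [\kappa]$, of \emph{configurations} $\vecc \in \N^\tau$, where $c_j$ is the number of type-$j$ jobs placed on one kind-$i$ machine and the multiplicity $\mu^i_\vecc$ is the number of kind-$i$ machines running $\vecc$. Feasibility of the encoded schedule is equivalent to (i)~$\sum_\vecc \mu^i_\vecc = m_i$ for every $i$, (ii)~$\sum_i \sum_\vecc \mu^i_\vecc \vecc = \ven$, and (iii)~the resulting objective value is at most $k$.

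Fix any feasible schedule $\sigma$, let $\CC_i(\sigma) \subseteq \N^\tau$ be the set of configurations it uses on kind-$i$ machines, and let $\ven^{(i)}$ be the resulting job vector there. For $C_{\max}$, set $\tilde\vecc \df (\vecc, 1) \in \N^{\tau+1}$; for $\sum w_j C_j$, additionally append $\text{cost}_i(\vecc) \in \N$, the WSPT cost of running $\vecc$ on a single kind-$i$ machine, and let $\text{Cost}_i$ denote the total contribution of kind-$i$ machines in $\sigma$. In either case the target vector $\veb^{(i)}$ -- namely $(\ven^{(i)}, m_i)$ or $(\ven^{(i)}, m_i, \text{Cost}_i)$ -- has nonnegative entries and lies in $\intcone\bigl(\{\tilde\vecc : \vecc \in \CC_i(\sigma)\}\bigr)$. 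Proposition~\ref{prop:es}(1) therefore yields a sub-support $\tilde\CC_i$ of cardinality at most $\la \veb^{(i)}\ra = \poly(\la I\ra)$, together with new nonnegative integer multiplicities whose combination still produces $\veb^{(i)}$.

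The certificate is the union over $i \in [\kappa]$ of the pairs $(\vecc, \mu^i_\vecc)$ with $\vecc \in \tilde\CC_i$. Every configuration entry is bounded by $\max_j n_j$ (and, if $C_{\max}$-feasible, by $k$), every multiplicity by $\max_i m_i$, and every cost by $k$, so every number has polynomial bit length and the whole certificate has size $\poly(\la I\ra)$. A verifier checks (i) and (ii) by summing coordinates, confirms the per-machine feasibility of every listed $\vecc$ (for $C_{\max}$: $\sum_j c_j p_j^i \le k$; which also gives the overall $\max \le k$), and for $\sum w_j C_j$ recomputes each $\text{cost}_i(\vecc)$ by sorting the jobs of $\vecc$ by $w_j/p_j^i$ and then checks $\sum_i \sum_\vecc \mu^i_\vecc\,\text{cost}_i(\vecc) \le k$. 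All of this runs in polynomial time, establishing \NP-membership.

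The main obstacle is preserving the objective value when the original multisets are replaced by the small Eisenbrand--Shmonin sub-supports; this is precisely the reason for including $\text{cost}_i(\vecc)$ as an extra coordinate of $\tilde\vecc$ in the $\sum w_j C_j$ case, so that the new multiplicities are forced to reproduce $\text{Cost}_i$ exactly. For $C_{\max}$ the argument is simpler because feasibility is a per-configuration property and is therefore automatically inherited by any sub-support of the original schedule's configuration set.
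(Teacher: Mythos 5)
Your proof is correct and follows essentially the same route as the paper's: use Eisenbrand--Shmonin (Proposition~\ref{prop:es}, part 1) on extended configurations to produce a poly-size sub-support that serves as the NP certificate, with an extra cost coordinate in the $\sum w_j C_j$ case to pin down the objective, then verify per-configuration feasibility and the linear aggregation constraints in polynomial time. The only cosmetic differences are that you invoke Eisenbrand--Shmonin once per machine kind rather than once globally (the paper appends a one-hot machine-kind indicator and decomposes the single vector $(\ven,\vem)$ or $(\ven,OPT,\vem)$), and you carry the exact WSPT cost per configuration rather than an upper bound $F$ with $f^i(\vecc)\le F\le OPT$ (the paper's inequality slack is only needed to make the aggregate cost hit $OPT$ exactly, which your per-kind exact-cost accounting avoids); both yield the same polynomial bound.
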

\begin{proof}
To show membership in \NP, we have to prove the existence of short certificates.
More precisely, for a high-multiplicity scheduling instance $I$ with a parameter $OPT$, we have to show that if $I$ has an optimum of at most $OPT$, then there exists a certificate of this fact of length $\poly(\la I \ra)$.
In both cases ($R | HM | C_{\max}$ and $R | HM | \sum w_j C_j$) the certificate will be a collection of configurations together with their multiplicities.
However, to use Proposition~\ref{prop:es} we will need to introduce a more complicated notion of an extended configuration.
\medskip\noindent$R | HM | C_{\max}$.
Let an instance $I$ of $R | HM | C_{\max}$ together with the value~$OPT$ be given.
For each machine kind $i \in [\kappa]$, define $X^i_{C_{\max}} \df \left\{\vecc \in \N^\tau \mid \vep^i \vecc \leq OPT\right\} \times \{0\}^{i-1} \times \{1\} \times \{0\}^{\kappa - i}$, and define the set of its \emph{extended configurations} of~$I$ to be $X_{C_{\max}} \df \bigcup_{i=1}^\kappa X^i_{C_{\max}}$.
The interpretation is that in any $\vex \in X_{C_{\max}}$ the first $\tau$ coordinates encode a configuration (i.e., an assignment of jobs to a machine) and the remaining $\kappa$ coordinates encode the kind of a machine for which this configuration can be processed in time at most $OPT$.
Then any decomposition of the vector $(\ven, \vem) = \sum_{\vex \in X_{C_{\max}}} \lambda_{\vex} \vex$ with $\velambda \in \N^{X_{C_{\max}}}$ corresponds to a solution of $I$ where the last job finishes in time at most $OPT$.
Finally, since all vectors in~$X_{C_{\max}}$ are nonnegative, Proposition~\ref{prop:es} (Part 1) applied to $\intcone(X_{C_{\max}})$ says that if such a decomposition exists (i.e., if~$I$ is a \textsc{Yes} instance), then there exists one with $|\suppo(\velambda)| \leq \la (\ven, \vem) \ra \leq \la I \ra$ and we are done.

\medskip\noindent$R | HM | \sum w_j C_j$.
Let $I$ be an instance of $R | HM | \sum w_j C_j$ together with the value~$OPT$.
It is well known~\cite{Smith1956} that on a single machine a schedule minimizing $\sum w_j C_j$ is one which schedules jobs according to their \emph{Smith ratios} $w_j / p_j^i$ non-increasingly.
For each machine kind $i \in [\kappa]$, we define $f^i \colon \N^\tau \to \N$ to be the value of $\sum w_j C_j$ for the aforementioned scheduling of the instance $\vecc$ on a single machine of kind $i$.
Define $X^i_{\sum w_j C_j} \df \{\left(\vecc, F\right) \in \N^{\tau+1} \mid \vecc \leq \ven, f^i(\vecc) \leq F \leq OPT\} \times \{0\}^{i-1} \times \{1\} \times \{0\}^{\kappa - i}$, and define the set of \emph{extended configurations} to be $X_{\sum w_j C_j} \df \bigcup_{i=1}^\kappa X^i_{\sum w_j C_j}$.
The difference, as compared with $C_{\max}$, is that $OPT$ does not define $X_{\sum w_j C_j}$ (we only use it to ensure finiteness) but we have an additional coordinate which expresses (an upper bound on) the contribution of each configuration (machine) to the objective.
Hence, any decomposition of the vector $(\ven, OPT, \vem) = \sum_{\vex \in X_{\sum w_j C_j}} \lambda_{\vex} \vex$ with $\velambda \in \N^{X_{\sum w_j C_j}}$ corresponds to a solution of $I$ of value at most $OPT$.
Proposition~\ref{prop:es} says if any decomposition exists (i.e., if $I$ is a \textsc{Yes} instance), then there exists one where $|\suppo(\velambda)| \leq \la (\ven, OPT, \vem) \ra$.
Because $f$ is a quadratic function with coefficients bounded by $p_{\max}$ and $w_{\max}$~\cite{KnopK:2017} we have $\la OPT \ra \leq \la I \ra^2$ and hence there exists a certificate of length $\poly(\la I \ra)$ and $R | HM | \sum w_j C_j$ is in \NP.
\end{proof}
}

\subsection{$N$-fold Integer Programming}
The \textsc{Integer Programming} problem is to solve:
\begin{equation}
\min f(\vex):\, A\vex = \veb, \, \vel \leq \vex \leq \veu, \, \vex \in \Z^n,\label{IP} \tag{IP}
\end{equation}
where $f\colon \R^n \to \R$, $A \in \Z^{m \times n}$, $\veb \in \Z^m$, and $\vel, \veu \in (\Z \cup \{\pm \infty\})^n$.

A \emph{generalized $N$-fold IP matrix} is defined as
\begin{align}
E^{(N)} = \left(
\begin{array}{cccc}
E^1_1    & E^2_1    & \cdots & E^N_1    \\
E^1_2    & 0      & \cdots & 0      \\
0    & E^2_2    & \cdots & 0      \\
\vdots & \vdots & \ddots & \vdots \\
0    & 0      & \cdots & E^N_2    \\
\end{array}
\right) \enspace .\label{nfold}
\end{align}
Here, $r,s,t,N \in \N$, $E^{(N)}$ is an $(r+Ns)\times Nt$-matrix, and $E^i_1 \in \Z^{r \times t}$ and $E^i_2 \in \Z^{s \times t}$ for all $i \in [N]$, are integer matrices.
Problem~\eqref{IP} with $A=E^{(N)}$ is known as \emph{generalized $N$\hy fold integer programming} (generalized $N$-fold IP).
``Regular'' $N$-fold IP is the problem where $E_1^i = E_1^j$ and $E_2^i = E_2^j$ for all $i, j \in [N]$.
Recent work indicates that the majority of techniques applicable to ``regular'' $N$-fold IP also applies to generalized $N$-fold IP~\cite{EisenbrandEtAl2019}.

The structure of $E^{(N)}$ allows us to divide any $Nt$-dimensional object, such as the variables of $\vex$, bounds $\vel, \veu$, or the objective $f$, into $N$ \textit{bricks} of size $t$, e.g. $\vex= \left(\vex^1, \dots, \vex^N\right)$.
We use subscripts to index within a brick and superscripts to denote the index of the brick, i.e.,~$x_j^i$ is the $j$-th variable of the $i$-th brick with $j \in [t]$ and $i \in [N]$.
We call a brick \emph{integral} if all of its coordinates are integral, and \emph{fractional} otherwise.

\subparagraph{\texorpdfstring{Huge $N$-fold}{Huge N-fold} IP.}
The \emph{huge $N$-fold IP} problem is an extension of generalized $N$-fold IP to the high-multiplicity scenario, where blocks come in types and are encoded succinctly by type multiplicities.
This means there could be an \emph{exponential} number of bricks in an instance with a polynomial encoding size.
The input to the huge $N$-fold IP problem with $\bar{\tau}$ \emph{types of blocks} is defined by matrices $E^i_1 \in \Z^{r \times t}$ and $E^i_2 \in \Z^{s \times t}$, $i \in [\bar{\tau}]$, vectors $\vel^1, \dots, \vel^{\bar{\tau}}$, $\veu^1, \dots, \veu^{\bar{\tau}} \in \Z^t$, $\veb^0 \in \Z^r$, $\veb^1, \dots, \veb^{\bar{\tau}} \in \Z^s$, functions $f^1, \dots, f^{\bar{\tau}} \colon \R^{t} \to \R$ satisfying $\forall i \in [\bar{\tau}], \, \forall \vex \in \Z^t$ we have $f^i(\vex) \in \Z$ and given by evaluation oracles, and integers $\mu^1, \dots, \mu^{\bar{\tau}} \in \N$ such that $\sum_{i=1}^{\bar{\tau}} \mu^i = N$.
We say that a brick is of type $i$ if its lower and upper bounds are $\vel^i$ and $\veu^i$, its right hand side is $\veb^i$, its objective is $f^i$, and the matrices appearing at the corresponding coordinates are $E^i_1$ and $E^i_2$.
Denote by $T_i$ the indices of bricks of type $i$, and note $|T_i| = \mu_i$ and $|\bigcup_{i \in [\bar{\tau}]} T_i| = N$.
The task is to solve~\eqref{IP} with a matrix $E^{(N)}$ which has $\mu^i$ blocks of type $i$ for each $i$.
Knop et al.~\cite{KnopKLMO:2019} have shown a fast algorithm solving huge $n$-fold IP.
The main idea of their approach is to prove a powerful proximity theorem showing how one can drastically reduce the size of the input instance given that one can solve a corresponding configuration LP (which we shall formally define later).
We will build on this approach here.
When $f^i$ are restricted to be separable quadratic (and convex) for all $i \in [\bar{\tau}]$, we call the problem \emph{quadratic huge $N$-fold IP}.

\subsection{Configurations LP of \texorpdfstring{Huge $N$-fold IP}{Huge N-fold IP}} \label{ssec:conflp}
Having modeled our scheduling problems as huge $N$-fold IP instances, our next goal is to solve the Configuration LP, which we will now define.
Because the results we derive below apply to any quadratic huge $N$-fold IP, we state them generally (and not as claims about the specific instances which we shall apply them to).

Let a huge $N$-fold IP instance with $\bar{\tau}$ types be fixed.
Recall that $\mu^i$ denotes the number of blocks of type $i$, and let $\vemu = \left(\mu^1, \dots, \mu^{\bar{\tau}}\right)$.
We define for each $i \in [\bar{\tau}]$ the set of configurations of type $i$ as
\[
\CC^i = \left\{\vecc \in \Z^t \mid E^i_2 \vecc = \veb^i, \, \vel^i \leq \vecc \leq \veu^i \right\} \enspace .
\]
Here we are interested in four instances of convex programming (CP) and convex integer programming (IP) related to huge $N$-fold IP.
First, we have the \emph{Huge IP}
\begin{equation} \label{eq:hugenfold}
\min f(\vex): \, E^{(N)} \vex = \veb, \, \vel \leq \vex \leq \veu, \, \vex \in \Z^{Nt} \enspace . \tag{HugeIP}
\end{equation}
Then, there is the \emph{Configuration LP} of~\eqref{eq:hugenfold},
\begin{align}
\min \vev \vey & = \sum_{i=1}^{\bar{\tau}} \sum_{\vecc \in \CC^i} f^i(\vecc) \cdot y(i, \vecc) & \label{eq:conflp_start} \\
\sum_{i=1}^{\bar{\tau}} E^i_1 \sum_{\vecc \in \CC^i} \vecc y(i, \vecc) &= \veb^0 &\notag \\
\sum_{\vecc \in \CC^i} y(i, \vecc) &= \mu^i & \forall i \in [\bar{\tau}]\notag \\
\vey &\geq \mathbf{0} \enspace . & \label{eq:conflp_end}
\end{align}
Let $B$ be its constraint matrix and $\ved = (\veb^0, \vemu)$ be the right hand side and shorten \eqref{eq:conflp_start}-\eqref{eq:conflp_end} to
\begin{equation} \label{eq:conflp}
\min \vev \vey:\, B \vey = \ved, \, \vey \geq \mathbf{0} \enspace . \tag{ConfLP}
\end{equation}
Finally, by observing that $B\vey=\ved$ implies $y(i,\vecc) \leq \|\vemu\|_\infty$ for all $i \in [\bar{\tau}], \vecc \in \CC^i$, defining $C = \sum_{i \in [\bar{\tau}]} \left|\CC^i\right|$, leads to the \emph{Configuration ILP},
\begin{equation} \label{eq:confilp}
\min \vev \vey: \,B \vey = \ved, \, \mathbf{0} \leq \vey \leq (\|\vemu\|_\infty, \dots, \|\vemu\|_\infty),\, \vey \in \N^{C} \enspace . \tag{ConfILP}
\end{equation}

The classical way to solve~\eqref{eq:conflp} is by solving its dual using the ellipsoid method and then restricting~\eqref{eq:conflp} to the columns corresponding to the rows encountered while solving the dual, a technique known as column generation.
The Dual LP of~\eqref{eq:conflp} in variables $\bm{\alpha} \in \R^r$, $\bm{\beta} \in \R^{\bar{\tau}}$ is:
\begin{align}
\max  & & \veb^0 \bm{\alpha} + \sum_{i=1}^{\bar{\tau}}\mu^i \beta^i & \notag \\
\textrm{s.t.} & & (\bm{\alpha} E^i_1) \vecc - f^i(\vecc)              & \leq -\beta^i & \forall i \in [\bar{\tau}], \,\forall \vecc \in \CC^i \label{eq:dualcons}
\end{align}
To verify feasibility of $(\bm{\alpha}, \bm{\beta})$ for $i \in [\bar{\tau}]$, we need to maximize the left-hand side of~\eqref{eq:dualcons} over all $\vecc \in \CC^i$ and check if it is at most $-\beta^i$.
This corresponds to solving the following \emph{separation problem}: find integer variables $\vecc$ which for a given vector $(\bm{\alpha}, \bm{\beta})$ solve
\begin{equation}
\min f^i(\vecc)- (\bm{\alpha} E^i_1) \vecc \,:\, E^i_2 \vecc = \veb^i, \, \vel^i \leq \vecc \leq \veu^i, \vecc \in \Z^t \enspace . \tag{$\sep$-IP} \label{eq:separation}
\end{equation}
Denote by $\sep(\vel^i, \veu^i, f^i_{\max}, E_1^i, E_2^i)$ the time needed to solve~\eqref{eq:separation}.
\begin{lemma}[{\lv{Knop et al.~}\cite[Lemma 12]{KnopKLMO:2019}}]
	\label{lem:frac}
	An optimal solution $\vey^*$ of~\eqref{eq:conflp} with $|\suppo(\vey^*)| \leq r + \bar{\tau}$ can be found in
	\(
	(r t \bar{\tau} \la f_{\max}, \vel, \veu, \veb, \vemu \ra)^{\Oh(1)} \cdot \max_{i \in [\bar{\tau}]} \sep(\vel^i, \veu^i, f^i_{\max}, E_1^i, E_2^i)
	\) time.
\end{lemma}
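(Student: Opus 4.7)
The plan is a textbook application of column generation via the equivalence of separation and optimization for linear programming (Grötschel--Lovász--Schrijver). The primal~\eqref{eq:conflp} has exponentially many variables (one per configuration) but only $r + \bar{\tau}$ equality constraints, so its Dual LP has dimension $r + \bar{\tau}$ but exponentially many constraints, one per pair $(i, \vecc)$ with $i \in [\bar{\tau}], \vecc \in \CC^i$. The crucial observation, already recorded in the excerpt, is that verifying a candidate dual vector $(\bm{\alpha}, \bm{\beta})$ against \emph{all} constraints of type $i$ is precisely the problem \eqref{eq:separation}: we compute $\min_{\vecc \in \CC^i} \bigl(f^i(\vecc) - (\bm{\alpha} E_1^i)\vecc\bigr)$ and compare it with $-\beta^i$. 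Iterating over $i \in [\bar{\tau}]$ gives a separation oracle for the dual polyhedron running in time $\bar{\tau} \cdot \max_i \sep(\vel^i, \veu^i, f^i_{\max}, E_1^i, E_2^i)$.

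With such an oracle in hand, I would run the ellipsoid method on the Dual LP. The ambient dimension is $r + \bar{\tau}$, and the bit-length of the constraints (and hence a bounding ball and inner ball) is polynomial in $\la f_{\max}, \vel, \veu, \veb, \vemu\ra$ together with $\la E_1^i, E_2^i \ra$; standard arguments show the ellipsoid method terminates after $\poly(r, t, \bar{\tau}, \la f_{\max}, \vel, \veu, \veb, \vemu\ra)$ calls to the separation oracle, yielding an optimal dual solution together with the polynomial-size collection $\CC^* \subseteq \bigcup_i \CC^i$ of configurations whose corresponding dual constraints were ever returned as separators.

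Next I would form the \emph{restricted primal} obtained from \eqref{eq:conflp} by keeping only the variables $y(i, \vecc)$ for $(i, \vecc) \in \CC^*$ and solve it with any polynomial LP algorithm. By LP duality and the standard column-generation argument, the optimum of the restricted primal equals the optimum of the full~\eqref{eq:conflp}: the dual optimum certificate is feasible for the restricted dual, and every extreme point of the dual polyhedron visited by the ellipsoid run is cut off by some column in $\CC^*$. Solving the restricted LP with a method that returns a \emph{basic} optimal solution $\vey^*$ ensures that at most $r + \bar{\tau}$ of its variables are nonzero, which is exactly the support bound claimed.

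The main thing to be careful about is the bit-complexity input to the ellipsoid method: one must argue a priori bounds on the magnitude of an optimal dual solution and on the ``well-roundedness'' of the dual feasible region in terms of $\la f_{\max}, \vel, \veu, \veb, \vemu, E_1^i, E_2^i\ra$, so that the polynomial number of iterations really is polynomial in the stated parameters. This is routine once one bounds $|f^i(\vecc)| \leq f^i_{\max}$ on $[\vel^i, \veu^i]$ and uses Cramer's rule on the vertices of the dual polyhedron; I would import these estimates directly rather than re-derive them, since the statement is attributed to \cite{KnopKLMO:2019}.
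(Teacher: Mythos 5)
Your proposal is correct and matches the approach the paper itself sketches just before the lemma (and which the cited reference carries out): run the ellipsoid method on the dual of~\eqref{eq:conflp} using~\eqref{eq:separation} as the separation oracle, then solve the restricted primal over the encountered columns and take a basic optimal solution, whose support is bounded by the number $r+\bar{\tau}$ of equality constraints. The bit-complexity caveat you flag is exactly the part delegated to the cited work, so nothing essential is missing.
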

Since~\eqref{eq:separation} is an IP, it can be solved using Proposition~\ref{prop:fastip} in time $g_1(E_2^i)^{\tw_D(E_2^i)} \cdot \poly(\vel^i, \veu^i, \veb^i, \|E_1^i\|_\infty f_{\max}, t, \bar{\tau})$.
Hence, together with Lemma~\ref{lem:frac}, we get the following corollary:
\begin{corollary}\label{cor:conflp_g1_tw}
	An optimal solution $\vey^*$ of~\eqref{eq:conflp} with $|\suppo(\vey^*)| \leq r + \bar{\tau}$ can be found in time
\(
(r t \bar{\tau} \la f_{\max}, \vel, \veu, \veb, \vemu \ra)^{\Oh(1)} \cdot \max_{i \in [\bar{\tau}]} g_1(E_2^i)^{\tw_D(E_2^i)}
\).
\end{corollary}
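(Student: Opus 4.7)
The plan is to derive the bound as a direct combination of Lemma~\ref{lem:frac} with Proposition~\ref{prop:fastip}. Lemma~\ref{lem:frac} already reduces the task of computing an optimal solution $\vey^*$ of~\eqref{eq:conflp} with $|\suppo(\vey^*)| \leq r + \bar{\tau}$ to solving at most $\bar{\tau}$ instances of the separation subproblem~\eqref{eq:separation}, at a polynomial multiplicative overhead of the form $(r t \bar{\tau} \la f_{\max}, \vel, \veu, \veb, \vemu \ra)^{\Oh(1)}$. Hence all that remains is to upper-bound $\sep(\vel^i, \veu^i, f^i_{\max}, E_1^i, E_2^i)$ for each $i \in [\bar{\tau}]$ by a factor of the shape $g_1(E_2^i)^{\Oh(\tw_D(E_2^i))}$ times a polynomial in the input length.

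To this end, I would observe that~\eqref{eq:separation} is itself an integer program whose constraint matrix is exactly $E_2^i$, whose variable bounds are $[\vel^i, \veu^i]$, whose right-hand side is $\veb^i$, and whose objective $f^i(\vecc) - (\bm{\alpha} E_1^i) \vecc$ is separable convex, since it is the sum of the separable convex $f^i$ and a linear function in $\vecc$. Therefore Proposition~\ref{prop:fastip} applies to~\eqref{eq:separation} and yields a running time of
\[
\bigl(\|E_2^i\|_\infty \cdot g_1(E_2^i)\bigr)^{\Oh(\tw_D(E_2^i))} \cdot \poly(t, L_i),
\]
where $L_i$ denotes the binary encoding length of the separation IP, which is polynomial in $\la f^i_{\max}, \vel^i, \veu^i, \veb^i, E_1^i, E_2^i, \bm{\alpha} \ra$ and thus, after the ellipsoid/column-generation stage used in Lemma~\ref{lem:frac}, polynomial in the ambient quantities listed in the statement.

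The last step is to simplify the leading factor. Since $\|E_2^i\|_\infty \leq g_1(E_2^i)$ whenever $E_2^i$ has a non-trivial kernel (and in the degenerate trivial-kernel case the separation IP has at most one feasible point, so the factor is polynomial anyway), one has $(\|E_2^i\|_\infty \cdot g_1(E_2^i))^{\Oh(\tw_D(E_2^i))} = g_1(E_2^i)^{\Oh(\tw_D(E_2^i))}$. Substituting this bound for $\sep(\cdot)$ into Lemma~\ref{lem:frac} collapses the two layers of polynomial overhead into the single factor $(r t \bar{\tau} \la f_{\max}, \vel, \veu, \veb, \vemu \ra)^{\Oh(1)}$ and produces exactly the stated running time.

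I expect the only mild technical point to be confirming that the algorithmic framework underlying Proposition~\ref{prop:fastip} truly handles the separable convex objective obtained by adding the linear perturbation $-(\bm{\alpha} E_1^i)\vecc$ to $f^i$; this is routine because Graver-basis augmentation works for any separable convex objective given by an evaluation oracle, and evaluating the perturbed objective reduces to one oracle call to $f^i$ plus a linear-algebra computation of cost $\poly(r, t)$. Beyond this sanity check, the argument is a one-step pipelining of previously cited results.
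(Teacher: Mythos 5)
Your overall plan matches the paper's own one-line derivation exactly: apply Lemma~\ref{lem:frac} to reduce the running time to a polynomial overhead times the separation time, then bound the separation time by running Proposition~\ref{prop:fastip} on~\eqref{eq:separation} with constraint matrix $E_2^i$ and separable convex objective $f^i - (\bm{\alpha}E_1^i)\vecc$. That pipelining is the right (and intended) argument.

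However, the step you introduce to remove the $\|E_2^i\|_\infty$ factor from the Proposition~\ref{prop:fastip} bound is flawed. The claim that $\|E_2^i\|_\infty \leq g_1(E_2^i)$ whenever $E_2^i$ has a non-trivial kernel is false in general: take $A = (3,\ -3)$, whose Graver basis is $\{(1,1),(-1,-1)\}$, so $g_1(A)=2<3=\|A\|_\infty$ (more extremely, $A = (M, -M)$ gives $g_1(A)=2$ while $\|A\|_\infty = M$). Proposition~\ref{prop:basebound} bounds $g_1$ \emph{above} by a function of $\|A\|_\infty$, and there is no converse inequality. So the simplification $(\|E_2^i\|_\infty\, g_1(E_2^i))^{\Oh(\tw_D(E_2^i))} = g_1(E_2^i)^{\Oh(\tw_D(E_2^i))}$ does not follow from what you wrote. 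To be fair, the paper's own proof sketch silently drops the $\|E_2^i\|_\infty$ factor as well; the point is harmless in the paper's applications (where $\|E_2^i\|_\infty\leq\|E\|_\infty$ is one of the parameters and is itself bounded polynomially, indeed by $p_{\max}$, in the scheduling models), but your attempted general justification is incorrect and should either be removed, or the stated bound should carry the $\|E_2^i\|_\infty$ factor (or explicitly invoke that $\|E_2^i\|_\infty$ is a parameter). Aside from this one claim, the argument is the same as the paper's.
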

We later show how that for our formulations of $R|HM|C_{\max}$ and $R|HM|\sum w_j C_j$, indeed $g_1(E_2^i)$ is polynomial in $\tau, p_{\max}$, and $\tw_D(E_2^i) = 1$, hence the~\eqref{eq:conflp} optimum can be found in polynomial time.

\section{Compressing High Multiplicity Scheduling to Quadratic \texorpdfstring{$N$}{N}-fold IP}\label{sec:compressingSchedulingToNFold}
\sv{\toappendix{\section{Additional Material for Section~\ref{sec:compressingSchedulingToNFold}}}}

In this section we are going to prove Theorem~\ref{thm:RHighMultiplicityPolykernel}.
To that end, we use the following assumption (which mainly simplifies notation).

\begin{remark}
  From here on, we assume $\tau \geq \|\vep\|_\infty$, since both quantities are parameters.
\end{remark}

\begin{reptheorem}{thm:RHighMultiplicityPolykernel}[repeated]
  \RHighMultiplicityPolykernelStatement
\end{reptheorem}
Recall that in order to use Theorem~\ref{thm:RHighMultiplicityPolykernel} to provide \emph{kernels} for selected scheduling problems (which are \NP-hard) we want to utilize Proposition~\ref{prop:compressionKernelEquivalenceForNP}.
Thus, we have to show that the ``target problem'' quadratic huge $N$-fold IP is in \NP.
\lv{\begin{lemma}}\sv{\begin{lemma}[\appmark]}\label{lem:hugenfoldnp}
	The decision version of quadratic huge $N$-fold IP belongs to \NP.
\end{lemma}
\toappendix{
  \lv{\begin{proof}}\sv{\begin{proof}[Proof of Lemma~\ref{lem:hugenfoldnp}]}
  	We will use Proposition~\ref{prop:es} to show that there exists an optimum whose number of distinct configurations is polynomial in the input length.
  	Such a solution can then be encoded by giving those configurations together with their multiplicities, and constitutes a polynomial certificate.
  	Recall that~\eqref{eq:confilp} corresponding to the given instance of huge $N$-fold is
  	\[
  	\min \vev \vey: \,B \vey = \ved, \, \mathbf{0} \leq \vey \leq (\|\vemu\|_\infty, \dots, \|\vemu\|_\infty),\, \vey \in \N^{C} \enspace .
  	\]
  	Let $X$ be the set of columns of the matrix $B$ extended with an additional coordinate which is the coefficient of the objective function $\vev$ corresponding to the given column, that is, $\vev\veb$ for a column~$\veb$ (i.e., the objective value of configuration $\veb$).
  	Hence $X \subseteq \Z^{r+\bar{\tau} + 1}$ and $\|\vex\|_\infty \leq \|\vel, \veu, f_{\max}\|_\infty =: M$ for any~$\vex \in X$.
  	Applying Proposition~\ref{prop:es}, part~2, to $X$, yields that there exists an optimal solution $\vey$ of~\eqref{eq:confilp} with $\suppo(\vey) = \tilde{X}$ satisfying $|\tilde{X}| \leq 2(r+\bar{\tau} + 1) \log (4(r + \bar{\tau} + 1) M)$, hence polynomial in the input length of the original instance.
  \end{proof}

  \begin{remark}
    Clearly Lemma~\ref{lem:hugenfoldnp} holds for any huge $N$-fold IP whose objective is restricted by some, not necessarily quadratic, polynomial.
  \end{remark}
}

\subparagraph{Using Theorem~\ref{thm:RHighMultiplicityPolykernel}.}
Before we move to the proof of Theorem~\ref{thm:RHighMultiplicityPolykernel} we first derive two simple yet interesting corollaries.
\begin{corollary}\label{cor:rObjKernel}
  The problems $R||C_{\max}$ and $R||\sum w_jC_j$ admit polynomial kernelizations when parameterized by $\tau,\kappa, p_{\max}$.
\end{corollary}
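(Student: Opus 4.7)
The proof plan is essentially to pipeline the three results already assembled in this section: the compression theorem, the \NP-membership lemma, and the compression-to-kernel equivalence.

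First, I would observe that any instance $I$ of $R||C_{\max}$ (resp. $R||\sum w_j C_j$) is trivially also an instance of $R|HM|C_{\max}$ (resp. $R|HM|\sum w_j C_j$): in polynomial time one bucket-sorts the jobs and machines to compute the job types and machine kinds $\tau, \kappa$, without changing $p_{\max}$. Thus the source parameters $(\tau, p_{\max}, \kappa)$ are (at most) those in the original instance, and solving one encoding is equivalent to solving the other.

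Second, I apply Theorem~\ref{thm:RHighMultiplicityPolykernel} to this high-multiplicity reformulation, obtaining in polynomial time an equivalent instance of quadratic huge $N$-fold IP whose size and parameters $(\bar\tau, t, \|E\|_\infty)$ are bounded by a polynomial in $(\tau, \kappa, p_{\max})$. This is a polynomial compression from $R||C_{\max}$ (resp. $R||\sum w_jC_j$) parameterized by $(\tau,\kappa,p_{\max})$ into quadratic huge $N$-fold IP.

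Third, I invoke Lemma~\ref{lem:hugenfoldnp}, which tells us that quadratic huge $N$-fold IP lies in \NP. Since $R||C_{\max}$ and $R||\sum w_j C_j$ are classically \NP-hard (even for unit weights and two machines in the makespan case, and a fortiori for $\sum w_j C_j$), the hypotheses of Proposition~\ref{prop:compressionKernelEquivalenceForNP} are all met: the source problem is \NP-hard, the target problem is in \NP, and we have a polynomial compression between them. Pipelining the compression with any polynomial-time Karp reduction from quadratic huge $N$-fold IP back to $R||C_{\max}$ (resp. $R||\sum w_j C_j$), guaranteed by the \NP-hardness of the latter, yields the desired polynomial kernel. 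There is no real obstacle here beyond checking that the parameter correspondence in Theorem~\ref{thm:RHighMultiplicityPolykernel} is polynomial in the intended direction, which it is by inspection of the statement.
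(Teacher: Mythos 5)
Your proposal is correct and takes essentially the same route as the paper: re-encode the standard instance in the high-multiplicity format, apply Theorem~\ref{thm:RHighMultiplicityPolykernel} to get a polynomial compression into quadratic huge $N$-fold IP, and then combine Lemma~\ref{lem:hugenfoldnp} with Proposition~\ref{prop:compressionKernelEquivalenceForNP} (using \NP-hardness of the source scheduling problems) to turn the compression into a kernel. The only difference is stylistic — you spell out the pipelining argument underlying Proposition~\ref{prop:compressionKernelEquivalenceForNP}, whereas the paper simply cites it.
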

\begin{proof}
  Let $\mathfrak{obj} \in \{C_{\max}, \sum w_j C_j\}$.
  We describe a polynomial compression from $R||\mathfrak{obj}$ to quadratic huge $N$-fold IP which, by Lemma~\ref{lem:hugenfoldnp}, yields the sought kernel, since $R||\mathfrak{obj}$ is \NP-hard and huge $N$-fold with a quadratic objective is in \NP.

  We first perform the high-multiplicity encoding of the given instance~$I$ of $R||\mathfrak{obj}$, thus obtaining an instance~$I_{HM}$ of $R|HM|\mathfrak{obj}$ with the input encoded as $(\ven, \vem, \vep, \vew)$.
  Now, we can apply Theorem~\ref{thm:RHighMultiplicityPolykernel} and obtain an instance~$\bar{I}_{\text{huge $N$-fold}}$ equivalent to~$I_{HM}$ with size bounded by a polynomial in $\kappa, \tau, p_{\max}$.
\end{proof}

\begin{proof}[Proof of Corollary~\ref{cor:pcmaxkernel}]
  This is now trivial, since it suffices to observe that $P||C_{\max}$ is a special case of $R||C_{\max}$, where there is only a single machine kind (i.e., $\kappa = 1$) and $\tau \le p_{\max}$ job types.
  Our claim then follows by Corollary~\ref{cor:rObjKernel} (combined with the fact that $P||C_{\max}$ is \NP-hard and $R||C_{\max}$ is in \NP).
\end{proof}

\subsection{Huge $n$-fold IP Models}\label{ssec:hugeNFoldIPModesHMScheduling}
Denote by $\ven_{\max}$ the $\tau$-dimensional vector whose all entries are $\|\ven\|_\infty$.
It was shown~\cite{KnopK:2017,KnopKLMO:2019} that $R|HM|C_{\max}$ is modeled as a feasibility instance of huge $n$-fold IP as follows.
Recall that we deal with the decision versions and that~$k$ is the upper bound on the value of the objective(s).
We set $\veb^0 = \ven$, the number of block types is $\bar{\tau} = \kappa$, $E_1^i := (I~\vezero) \in \Z^{\tau \times (\tau+1)}$, $E_2^i := (\vep^i~1)$, $\vel^i = \vezero, \veu^i = (\ven, \infty)$, $\veb^i = k$, for $i \in [\kappa]$, and the multiplicities of blocks are $\vemu = \vem$.
The meaning is that the first type of constraints expressed by the $E_1^i$ matrices ensures that every job is scheduled somewhere, and the second type of constraints expressed by the $E_2^i$ matrices ensures that every machine finishes in time $C_{\max}$.

In the model of $R|HM|\sum w_j C_j$, for each machine kind $i \in [\kappa]$, we define $\preceq^i$ to be the ordering of jobs by the $w_j/p^i_j$ ratio non-increasingly, and let $\vea = (a_1, a_2, \dots, a_\tau)$ be a reordering of $\vep^i$ according to $\preceq^i$.
We let
\[
G^i := \left(\begin{matrix}
a_1 & 0 & 0 & \dots & 0 \\
a_1 & a_2 & 0 & \dots & 0 \\
a_1 & a_2 & a_3 & \dots & 0 \\
\vdots & & & \ddots &  \\
a_1 & a_2 & a_3 &\dots & a_\tau
\end{matrix}\right), \qquad \qquad H := -I,
\]
with $I$ the $\tau \times \tau$ identity, and define $F^i := (G^i~H)$ in two steps.
Denote by $I^{\preceq^i}$ a matrix obtained from the $\tau \times \tau$ identity matrix by permuting its columns according to $\preceq^i$.
The model is then $\veb^0 = \ven$, the number of block types is again $\bar{\tau} = \kappa$,  for each $i \in [\kappa]$ we have $E_1^i = (I^{\preceq^i}~\vezero) \in \Z^{\tau \times 2\tau}$, $E_2^i = \bar{F}^i$, $\vel^i = \vezero$, $\veu^i = (\ven_{\max}, p_{\max}\tau \ven_{\max})$, $\veb^i = \vezero$, $f^i$ is a separable convex quadratic function (whose coefficients are related to the $w_j/p^i_j$ ratios), and again $\vemu = \vem$.
Intuitively, in each brick, the first $\tau$ variables represent numbers of jobs of each type on a given machine, and the second $\tau$ variables represent the amount of processing time spend by jobs of the first $j$ types with respect to the ordering $\preceq^i$.

\subsection{Solving The Separation Problem Quickly: $C_{\max}$}\label{ssec:solvingSeparationProblemQuicklyCmax}
The crucial aspect of complexity of~\eqref{eq:separation} is its constraint matrix $E_2^i$.
For $R|HM|C_{\max}$, this is just the vector $(\vep^i,1)$.
Clearly $tw_D((\vep^i,1)) = 1$ since $G_D((\vep^i,1))$ is a single vertex.
By Proposition~\ref{prop:basebound}, $g_1((\vep^i,1)) \leq 2\|\vep\|_\infty +1$.
Moreover, $f_{\max}$ depends polynomially on $\|\ven, \vem, \vep\|_\infty$.
Hence, Corollary~\ref{cor:conflp_g1_tw} states that~\eqref{eq:conflp} of the $R|HM|C_{\max}$ model can be solved in time
$(r t \bar{\tau} \la f_{\max}, \vel, \veu, \veb, \vemu \ra)^{\Oh(1)} \cdot \max_i g_1(E_2^i)^{\tw_D(E_2^i)} = \poly(p_{\max}, \tau, \log \|\ven, \vem, \vep\|_\infty)$, which is polynomial in the input.

\subsection{Solving The Separation Problem Quickly: $\sum w_j C_j$}\label{ssec:solvingSeparationProblemQuicklysumwjCj}
The situation is substantially more involved in the case of $R|HM|\sum w_j C_j$: in order to apply Corollary~\ref{cor:conflp_g1_tw}, we need to again bound $g_1(E_2^i)$ and $\tw_D(E_2^i)$, but the matrix $E_2^i$ is more involved now.
Let
\[
\bar{G}^i := \left(
\begin{matrix}
a_1 & 0 & \dots & 0 \\
0 & a_2 & \dots & 0 \\
\vdots &  & \ddots & \vdots \\
0 & 0 & \dots & a_\tau
\end{matrix}
\right), \qquad \qquad
\bar{H}:= \left(
\begin{matrix}
-1 & 0 & 0 & \dots & 0  \\
1 & -1 & 0 & \dots & 0  \\
0 & 1 & -1 & \dots & 0  \\
\vdots &  & \ddots & \ddots & \vdots \\
0 & 0 & \dots & 1 & -1
\end{matrix}
\right),
\]
and define $\bar{F}^i = (\bar{G}^i~\bar{H})$.
Now observe that $F^i$ and $\bar{F}^i$ are row-equivalent\footnote{Two matrices $A, A'$ are row-equivalent if one can be transformed into the other using elementary row operations.}.
This means that we can replace $F^i$ with $\bar{F}^i$ without changing the meaning of the constraints and without changing the feasible set.
But while $\tw_D(F_i) = \tau$ (because it is the clique $K_{\tau}$), we have
\lv{\begin{lemma}}\sv{\begin{lemma}[\appmark]}\label{lem:twD}
	For each $i \in [\kappa]$, $\tw_D(\bar{F}^i) = 1$.
\end{lemma}
\toappendix{
  \lv{\begin{proof}}\sv{\begin{proof}[Proof of Lemma~\ref{lem:twD}]}
  	Observe that the dual graph of~$\bar{H}$ is a path (on~$\tau$ vertices).
  	By the definition of the dual graph the~$\bar{G}^i$ part of~$\bar{F}^i$ does not add any edges to it.
  \end{proof}
}
Note that application of Proposition~\ref{prop:basebound} yields $g_1(E^i_2) \le \Oh(\tau^\tau)$.
This general upper bound, as we shall see, is not sufficient for our purposes, since we need $g_1(E_2^i) \leq \poly(\tau)$.
However, we can improve it significantly:
\begin{lemma}[Hill-cutting] \label{lem:hillcut}
	We have $g_\infty(F^i), g_\infty(\bar{F}^i) \leq \Oh(\tau^4)$ and $g_1(F^i), g_1(\bar{F}^i) \leq \Oh(\tau^5)$ for every $i \in [\kappa]$.
\end{lemma}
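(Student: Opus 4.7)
My plan is to bound $g_\infty(\bar F^i)$ directly; since $F^i$ and $\bar F^i$ are row-equivalent and row operations preserve the kernel, we have $\G(F^i) = \G(\bar F^i)$, and the same bound transfers to $F^i$. The $g_1$ bounds then follow from $g_1 \le 2\tau \cdot g_\infty$, as the ambient dimension in both cases is $2\tau$.

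Let $(\vecc, \veq) \in \G(\bar F^i)$. The kernel equation $\bar F^i (\vecc, \veq)^\top = 0$ forces $q_j = \sum_{\ell \le j} a_\ell c_\ell$ with the convention $q_0 \df 0$, so $\veq$ is the sequence of prefix sums of $(a_\ell c_\ell)_\ell$. The relation $(\vecc', \veq') \sqsubseteq (\vecc, \veq)$ must hold componentwise on both halves, and it is this doubled constraint that keeps the Graver basis controlled despite the underlying kernel being parameterised freely by $\vecc \in \Z^\tau$.

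The main step is a hill-cutting argument on the trajectory $(q_j)_{j=0}^\tau$. Let $M = \|\veq\|_\infty$ be attained at some $j^*$ with $q_{j^*} = M$ (WLOG positive). A preliminary structural observation: if $q_j = 0$ at an interior index of the support of $\vecc$, then splitting $\vecc$ at $j$ yields two nonzero conformal summands of $(\vecc, \veq)$, contradicting Graver minimality; so the trajectory can return to zero only at boundaries of the support. Inside the positive excursion containing $j^*$ (the maximal interval around $j^*$ on which $q_j > 0$) I look for two indices $j_0 < j_0'$ and a level $\alpha \in [0, M)$ with $q_{j_0} = q_{j_0'} = \alpha$ and $q_j \ge \alpha$ for every $j \in [j_0, j_0']$. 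Given such a triple, the truncated vector $c^{(1)}_\ell \df c_\ell \cdot \mathbf{1}_{\ell \in (j_0, j_0']}$ induces the trajectory $q^{(1)}_j = q_j - \alpha$ on $(j_0, j_0']$ and $0$ elsewhere, so $(\vecc^{(1)}, \veq^{(1)})$ is a nonzero kernel element strictly $\sqsubseteq$-below $(\vecc, \veq)$, a contradiction. A pigeonhole argument over the at most $\tau$ distinct integer values attained by $(q_j)$ in the excursion, combined with the per-step bound $|a_\ell c_\ell| \le p_{\max}\|\vecc\|_\infty \le \tau\|\vecc\|_\infty$, forces such a triple to exist once $M$ is of order $\tau^3$, yielding $\|\veq\|_\infty \le \Oh(\tau^3)$.

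Finally, from $a_j c_j = q_j - q_{j-1}$ and $a_j \ge 1$ we get $|c_j| \le 2M$, so the bound on $\|\veq\|_\infty$ immediately yields $\|\vecc\|_\infty \le \Oh(\tau^3)$. Combining, $g_\infty(\bar F^i) \le \Oh(\tau^4)$ with slack for the hidden constants, and then $g_1(\bar F^i) \le 2\tau \cdot g_\infty(\bar F^i) \le \Oh(\tau^5)$. I expect the main obstacle to be making the pigeonhole in the cutting step fully quantitative: the trajectory takes integer values but jumps with variable step sizes $a_\ell c_\ell$, so a prescribed level $\alpha$ need not be hit exactly and one has to reason simultaneously about exact hits and up/down crossings to guarantee a valid cutting triple $(j_0, j_0', \alpha)$ whenever $M$ crosses the $\tau^3$ threshold.
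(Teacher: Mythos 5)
Your plan matches the paper's at the high level: work with the prefix--sum trajectory $q_j = \sum_{\ell \le j} a_\ell c_\ell$, observe that a large $\|\veq\|_\infty$ lets you carve off a smaller conformal kernel element, and convert $\ell_\infty$ to $\ell_1$ by the dimension $2\tau$. The structural note that an interior zero of $\veq$ on the support of $\vecc$ yields a conformal split is correct and a nice way to justify restricting attention to a single excursion. However, the central cutting step has a genuine gap, and you have correctly identified exactly where: your truncation $c^{(1)}_\ell = c_\ell\,\mathbf{1}_{\ell \in (j_0, j_0']}$ produces a valid $\sqsubseteq$-minorant \emph{only if} $q_{j_0} = q_{j_0'}$ exactly (otherwise the tail trajectory of $\veq^{(1)}$ on $j > j_0'$ sits at the nonzero constant $q_{j_0'} - q_{j_0}$ and need not be conformal to $\veq$ there). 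Since the increments $a_\ell c_\ell$ are arbitrary integers, no pigeonhole over $\tau$ indices and $\tau$ ``distinct values'' forces such an exact repeat; the trajectory can perfectly well be injective. The circularity you would need to break (the step size bound $\tau\|\vecc\|_\infty$ uses the very quantity you are bounding) is a further symptom of the same problem.

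The paper resolves this not by finding an exact level repeat but by \emph{explicitly building a flat cut element whose coefficients are tailored to divisibility}: once some increment $|z_k - z_{k-1}| = |a_k x_k|$ exceeds $\approx \tau^3$ (so $|x_k|$ is large), the paper either takes $g_k = 1$, $h_k = \dots = h_\tau = a_k$ (if the trajectory stays $\ge \tau$ to the right), or locates an index $\ell > k$ with $x_\ell < -\tau$ and sets $g_k = a_\ell$, $g_\ell = -a_k$, $h_k = \dots = h_{\ell-1} = a_k a_\ell$. The cross-multiplication by $a_k$ and $a_\ell$ is exactly what makes the plateau land precisely on a valid kernel element regardless of whether the original trajectory returns to any given level. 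Your ``shape-following'' truncation needs an exact hit; the paper's flat-plateau construction dodges that by scaling. Also note the paper's trigger is a bound on the \emph{increments} $|z_k - z_{k-1}|$ rather than on the peak $M$; when all increments are at most $\Oh(\tau^3)$ this directly yields $\|\vex\|_\infty = \Oh(\tau^3)$ and $\|\vez\|_\infty = \Oh(\tau^4)$ by summing over $\le \tau$ steps, which is why the paper's $g_\infty$ bound is $\Oh(\tau^4)$ rather than the $\Oh(\tau^3)$ your (unquantified) argument would seem to give. To complete your proof, replace the exact-level pigeonhole with the paper's explicit cut elements, or equivalently trigger on a large increment and then distinguish the ``trajectory stays high'' and ``trajectory returns low, so some $x_\ell$ is very negative'' cases as the paper does.
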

\begin{proof}
	Let $F = F^i$ for some $i \in [\kappa]$.
	Let $(\vex, \vez) \in \Z^{2\tau}$ be some vector satisfying $F \cdot (\vex, \vez) = \vezero$.
	Our goal now is to show that whenever there exists $k \in [\tau]$ with $|z_k - z_{k-1}| > 2\tau^3 + 1$ (where we define $z_0 := 0$ for convenience), then we can construct a non-zero integral vector $(\veg, \veh) \in \Ker_{\Z}(F)$ satisfying $(\veg, \veh) \sqsubseteq (\vex, \vez)$, which shows that $(\vex, \vez) \not\in \G(A)$.
	If no such index $k$ exists, it means that $\|\vex\|_\infty \leq \Oh(\tau^3)$ because $z_k - z_{k-1} = a_k x_k$ holds in $F$ (and~$\bar{F}^i)$.
	Moreover, if $|z_k - z_{k-1}| \leq 2\tau^3 + 1$ for all $k \in [\tau]$, then  $|z_k| \leq \Oh(\tau^4)$ for every $k \in [\tau]$, hence $\|(\vex, \vez)\|_\infty \leq \Oh(\tau^4)$.
	Note that, since the dimension of~$(\vex,\vez)$ is~$2\tau$, this also implies $\| (\vex,\vez) \|_1 \leq \Oh(\tau^5)$.
	Thus we now focus on the case when $\exists k \in [\tau]: |z_k - z_{k-1}| > 2\tau^3+1$.

	Let us now assume that $(z_k-z_{k-1})$ is positive and $z_k \geq \tau^3$.
	There are three other possible scenarios: when $(z_k-z_{k-1})$ is positive but $z_k < \tau^3$, or when $(z_k-z_{k-1})$ is negative and $z_k < -\tau^3$ or $z_k \geq -\tau^3$.
	We will later show that all these situations are symmetric to the one we consider and our arguments carry over easily, hence our assumption is without loss of generality.

	\lv{\begin{claim}}\sv{\begin{claim}[\appmark]}\label{clm:hillcutClaim}
		If $(z_k-z_{k-1})$ is positive and $z_k \geq \tau^3$, then there exists nonzero $(\veg,\veh) \in \Ker_{\Z}(F)$ with $(\veg,\veh) \sqsubseteq (\vex,\vez)$.
	\end{claim}
  \toappendix{
  	\lv{\begin{claimproof}}\sv{\begin{claimproof}[Proof of Claim~\ref{clm:hillcutClaim}]}
  		We distinguish two cases: either $z_k, \dots, z_\tau \geq \tau$, or there exists $\ell' > k$ such that $z_{\ell'} < \tau$.
  		In the first case, we set $g_k=1$ and $h_k = h_{k+1} = \cdots = h_\tau=a_k$, and all other coordinates of $(\veg, \veh)$ to zero.
  		It is straightforward to verify that $F \cdot (\veg, \veh) = \vezero$.
  		Clearly also $(\veg, \veh) \sqsubseteq (\vex, \vez)$ and we are done.

  		For the second case we assume there exists an index $\ell' > k$ such that $z_{\ell'} < \tau$.
  		We claim that then there exists an $\ell > k$ such that $x_\ell < -\tau$.
  		Suppose not: then we have $\sum_{q=k+1}^{\ell'} a_q x_q \geq \sum_{q=k+1}^{\ell'} \tau \cdot (-\tau) \geq -\tau^2$ and hence $z_{\ell'} = z_k + \sum_{q=k+1}^{\ell'} a_q x_q \geq \tau^3 - \tau^2 \geq \tau^2$, a contradiction (for~$\tau \geq 2$).
  		Let $\ell$ be the smallest index larger than $k$ satifying $x_\ell < -\tau$.
  		Note that by the minimality of~$\ell$ we have $z_q \geq \tau^2$ for all $q = k, k+1, \ldots, \ell-1$.
  		Now let $g_k := a_\ell$, $g_\ell := -a_k$, $h_k, h_{k+1}, \dots, h_{\ell-1} := a_k \cdot a_\ell$, and all remaining coordinates be zero.
  		It is easy to check that $(\veg, \veh) \in \Ker_{\Z}(F)$, and it remains to show that $(\veg, \veh) \sqsubseteq (\vex, \vez)$.
  		Since $a_kx_k = z_k - z_{k-1} \geq \tau^3$ we clearly have $x_k \geq \tau^2$ hence $g_k < x_k$, and since $x_\ell < -\tau$ we also have $|g_\ell| < |x_\ell|$, so $\veg \sqsubseteq \vex$.
  		Next, since $z_k \geq \tau^3$ and $\ell$ is smallest such that $x_\ell < -\tau$, the sequence $z_{k+1}, z_{k+2}, \dots, z_{\ell}$ decreases by at most $\tau^2$ in each step, and there are at most $\tau-1$ steps, hence $z_{q} \geq \tau^3 - (\tau-1)\tau^2 = \tau^2 \geq a_k \cdot a_\ell$ for all $q = k, \ldots, \ell-1$, concluding $\veh \sqsubseteq \vez$.
  	\end{claimproof}
  }
	Let us consider the remaining symmetric cases.
	If $z_k - z_{k-1}$ is negative and $z_k < -\tau^3$, then $(-\vex, -\vez)$ satisfies the original assumption, leading to some $(\veg', \veh') \sqsubseteq (-\vex, -\vez)$, hence $(-\veg', -\veh') \sqsubseteq (\vex, \vez)$ and we are done.
	If $z_k - z_{k-1}$ is negative but $z_k > -\tau^3$, then we would pick the largest index $\ell$ smaller than $k$ with $x_\ell > \tau$ and continue as before (the symmetry is that now $\ell$ is to the left of $k$ rather than to its right; that is, the case distinction from the previous paragraph is according to the value of $z_1$).
	Lastly, if $z_k - z_{k-1}$ is positive but $z_k < \tau^3$, negating $(\vex, \vez)$ gives a reduction to the previous case.
\end{proof}

\toappendix{
  \begin{remark}
  We have called Lemma~\ref{lem:hillcut} the ``Hill-cutting Lemma''.
  This is based on the intuitive visualization of the proof (a large coordinate corresponds to a hill, and subtracting a Graver element from it cuts the hill down, until no large hills are left).
  It is also a name known in the research of VASSes (vector addition systems with states) where a similar idea appears in a paper of Valiant and Patterson from 1975~\cite{ValiantP:1975}.
  \end{remark}
  \medskip
}
Together with the observation from the previous section and using our newly obtained bounds together with Corollary~\ref{cor:conflp_g1_tw}, we obtain:
\begin{corollary}\label{cor:solvingConfIPHMScheduling}
	Let~$I=(\ven, \vem, \vep, \vew)$ be an instance of $R|HM|C_{\max}$ or $R|HM|\sum w_jC_j$.
	A~\eqref{eq:conflp} optimum $\vey^*$ with $|\suppo(\vey^*)| \leq r + \bar{\tau}$ can be found in time $\poly(p_{\max}, \tau, \kappa, \la \ven, \vem, \vep, \vew \ra)$.
\end{corollary}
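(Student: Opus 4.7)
The plan is to instantiate Corollary~\ref{cor:conflp_g1_tw} for the two huge $N$-fold IP models of Section~\ref{ssec:hugeNFoldIPModesHMScheduling} and verify that every quantity appearing in its running-time bound,
\[
(r\, t\, \bar{\tau}\, \la f_{\max}, \vel, \veu, \veb, \vemu \ra)^{\Oh(1)} \cdot \max_{i \in [\bar{\tau}]} g_1(E_2^i)^{\tw_D(E_2^i)},
\]
is polynomial in $p_{\max}, \tau, \kappa$, and $\la I \ra$.

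First, I would read off from the two models that $r = \tau$, $\bar{\tau} = \kappa$, and $t \in \{\tau+1, 2\tau\}$, all polynomial in the target parameters. The encoding lengths $\la \vel, \veu, \veb, \vemu \ra$ are polynomial in $\la I \ra$ (after replacing the formal $\infty$ entry of $\veu^i$ in the $C_{\max}$ model by the natural idle-time bound $p_{\max} \|\ven\|_\infty$, which does not change the feasible region). The quantity $\la f_{\max} \ra$ is $\Oh(1)$ for $C_{\max}$ (a feasibility instance, so we may take $f \equiv 0$), and for $\sum w_j C_j$ it is polynomial in $\la I \ra$ because $f^i$ is a separable convex quadratic whose coefficients and whose domain are bounded polynomially in $p_{\max}, w_{\max}$ and $\|\ven\|_\infty$.

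The meat of the argument lies in bounding $g_1(E_2^i)$ and $\tw_D(E_2^i)$. For $R|HM|C_{\max}$ this is already laid out in Subsection~\ref{ssec:solvingSeparationProblemQuicklyCmax}: the dual graph of $(\vep^i, 1)$ is a single vertex, so $\tw_D(E_2^i) = 1$, and Proposition~\ref{prop:basebound} yields $g_1(E_2^i) \leq 2 p_{\max} + 1$. For $R|HM|\sum w_j C_j$ I would first replace the matrix $F^i$ by the row-equivalent $\bar{F}^i$; this substitution preserves the feasible region but radically simplifies the dual graph. With the substitution in place, Lemma~\ref{lem:twD} gives $\tw_D(\bar{F}^i) = 1$ and Lemma~\ref{lem:hillcut} gives $g_1(\bar{F}^i) \leq \Oh(\tau^5)$.

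Plugging these bounds in, the factor $\max_i g_1(E_2^i)^{\tw_D(E_2^i)}$ becomes $\Oh(p_{\max})$ for $C_{\max}$ and $\Oh(\tau^5)$ for $\sum w_j C_j$, so the whole running-time bound collapses to $\poly(p_{\max}, \tau, \kappa, \la I \ra)$ as required. The only genuinely non-routine step, and the main obstacle overcome by this corollary, is the passage from $F^i$ to $\bar{F}^i$ in the $\sum w_j C_j$ case: the naive matrix $F^i$ has $\tw_D(F^i) = \tau$ (its dual graph is $K_\tau$), which would blow the running time up to $p_{\max}^{\Omega(\tau)}$ and destroy the polynomial bound. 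Lemmas~\ref{lem:twD} and~\ref{lem:hillcut} are tailored precisely to enable and then exploit that substitution.
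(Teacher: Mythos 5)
Your proposal is correct and follows exactly the same route as the paper: instantiate Corollary~\ref{cor:conflp_g1_tw} with the two models from Subsection~\ref{ssec:hugeNFoldIPModesHMScheduling}, bound $g_1(E_2^i)$ and $\tw_D(E_2^i)$ via Proposition~\ref{prop:basebound} in the $C_{\max}$ case and via the $F^i \to \bar{F}^i$ replacement together with Lemmas~\ref{lem:twD} and~\ref{lem:hillcut} in the $\sum w_j C_j$ case, and check that the remaining factors are polynomial. Your added remarks on replacing the formal $\infty$ upper bound and on why $\la f_{\max} \ra$ is small make explicit a couple of details the paper leaves implicit, but the argument is the same.
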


\section{Kernelization of Huge \texorpdfstring{$N$}{N}-fold IP}\label{sec:kernelizationOfHugeNFold}
\sv{\toappendix{\section{Additional Material for Section~\ref{sec:kernelizationOfHugeNFold}}}}

\subsection{Reducing by proximity} \label{ssec:reduceconflp}
Our first step in obtaining a kernel for quadratic huge $N$-fold IP is to solve~\eqref{eq:conflp}, as discussed previously (Lemma~\ref{lem:frac} and Corollary~\ref{cor:conflp_g1_tw}).
The next step is to obtain an equivalent (quadratic) huge $N$-fold IP instance whose components will be polynomially bounded, except for the objective.
For that, we derive the following proximity theorem.
We let $\{x\} = x - \floor{x}$ be the fractional part of $x$, and extend this definition coordinate-wise to vectors, i.e., $\{\vex\} = (\{x_1\}, \{x_2\}, \dots, \{x_n\})$.
\begin{reptheorem}{thm:hugeNFoldKernel}
  Let $\vey^*$ be an optimum of~\eqref{eq:conflp} with $|\suppo(\vey^*)| \leq r+\bar{\tau}$ and let
  \begin{equation*}
  P \df \left((r+\bar{\tau}) 26t^4 \log(t\|E^1_2,\dots,E^{\bar{\tau}}_2\|_\infty)\right)  (2r)^{r+1} (\|E\|_\infty s)^{3rs} \enspace .
  \end{equation*}
  Define $\vey^*_{-P} \df \max\{\vezero, \floor{\vey} - P \bm{1}\}$ (where the $\max$ is taken coordinate-wise).
  and for each $i \in [\bar{\tau}]$, let $\hat{\vecc}_i \df \frac{1}{\| \{\vey^i_{-P}\} \|_1} \sum_{\vecc \in \CC^i} \{y_{-P}(i, \vecc)\} \vecc$ and denote $\hat{\CC} = \{(i,\hat{\vecc}^i) \mid i \in [\bar{\tau}]\}$.
  There exists an optimal solution $\vezeta = \vey_{-P} + \bar{\vezeta}$ of~\eqref{eq:confilp} with $\|\bar{\vezeta}\|_1 \leq (2\bar{\tau} + r)P$ and
  \[
    \suppo(\bar{\vezeta}) \subseteq \{(i,\vecc) \mid i \in [\bar{\tau}], \,\vecc \in \CC^i, \, \exists (i, \vecc') \in \suppo(\vey) \cup \hat{\CC}: \|\vecc - \vecc'\|_\infty \leq P \} \enspace .
  \]
\end{reptheorem}
\begin{proof}
The proof is by a reinterpretation of a proximity theorem of Knop et al.~\cite[Theorem~21]{KnopKLMO:2019}.
Define a mapping $\varphi$ which assigns to each solution $\vey$ of~\eqref{eq:conflp} a solution $\vex$ of the relaxation of~\eqref{eq:hugenfold} as follows.
For each coordinate $(i,\vecc)$ such that $y(i,\vecc) > 0$, add to $\vex$ $\floor{y(i,\vecc)}$ bricks of type $i$ with configuration $\vecc$.
Then, for each $i \in [\bar{\tau}]$, add $\sum_{\vecc \in \CC^i} \{y(i, \vecc)\} = \|\{\vey^i\}\|_1 = \|\{\vey_{-P}^i\}\|_1$ bricks of type $i$ with value $\hat{\vecc}^i$, i.e., the remaining bricks take values which are the ``average of remaining configurations of type $i$.''
\begin{proposition}[{\cite[Theorem~21]{KnopKLMO:2019}}]
	\label{thm:old_proximity}
	Let $\vey$ be an optimum of~\eqref{eq:conflp} with $|\suppo(\vey)| \leq r+\bar{\tau}$ and $\vex^* = \varphi(\vey)$.
	Then there exists an optimal solution $\vez^*$ of~\eqref{eq:hugenfold} such that
	\begin{equation*}
	\|\vez^*-\vex^*\|_1 \leq \left((r+\bar{\tau}) 26t^4 \log(t\|E^1_2,\dots,E^\tau_2\|_\infty)\right)  (2r)^{r+1} (\|E\|_\infty s)^{3rs} = P \enspace .
	\end{equation*}
\end{proposition}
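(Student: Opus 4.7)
The plan is to invoke Proposition~\ref{thm:old_proximity} to produce an integer optimum $\vez^*$ of~\eqref{eq:hugenfold} satisfying $\|\vez^* - \vex^*\|_1 \leq P$ with $\vex^* = \varphi(\vey^*)$, and then translate $\vez^*$ back into a solution $\vezeta$ of~\eqref{eq:confilp} by counting brick configurations: $\zeta(i,\vecc) := |\{k \in T_i : \text{brick } k \text{ of } \vez^* \text{ equals } \vecc\}|$. Because~\eqref{eq:confilp} and~\eqref{eq:hugenfold} are equivalent reformulations---the counting map preserves both the constraints $B\vey=\ved$ and the value of the separable objective---$\vezeta$ is in fact an optimum of~\eqref{eq:confilp}. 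It then remains to verify the two claimed properties of $\bar{\vezeta} := \vezeta - \vey^*_{-P}$.

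For the support claim, let $(i,\vecc) \in \suppo(\bar{\vezeta})$. If $(i,\vecc) \in \suppo(\vey^*)$, then $\vecc' := \vecc$ trivially witnesses the required containment. Otherwise $y^*_{-P}(i,\vecc) = 0$ forces $\zeta(i,\vecc) \geq 1$, so some brick $k \in T_i$ of $\vez^*$ equals $\vecc$. By the construction of $\varphi$, the matched brick of $\vex^*$ takes value $\vecc'$ which is either an integer configuration in $\suppo(\vey^{*i})$ or the fractional centroid $\hat{\vecc}^i \in \hat{\CC}$; in either case $\|\vecc-\vecc'\|_\infty \leq \|\vecc-\vecc'\|_1 \leq \|\vez^*-\vex^*\|_1 \leq P$, as required.

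For the $\ell_1$ bound, interpret the identity brick-matching as an optimal transportation plan, per type $i$, from a source multiset consisting of $\floor{y^*(i,\vecc)}$ units at each $\vecc \in \CC^i$ together with $F_i := \|\{\vey^{*i}\}\|_1$ units at $\hat{\vecc}^i$ to the sink multiset $\zeta(i,\cdot)$, with total cost $\|\vez^*-\vex^*\|_1 \leq P$. Any unit of mass moved between two \emph{distinct} integer configurations costs at least $1$, so the total such mass is at most $P$, and each such unit contributes $+2$ to $\sum_{\vecc}|\zeta(i,\vecc)-\floor{y^*(i,\vecc)}|$ (once as outflow at the source, once as inflow at the destination). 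Every other change to this sum comes from the $F_i$ fractional source units being reassigned to integer destinations, contributing $+1$ each. Summing over $i$ gives
\[
\sum_{i,\vecc}|\zeta(i,\vecc)-\floor{y^*(i,\vecc)}| \;\leq\; 2P + \sum_i F_i \;\leq\; 2P + |\suppo(\vey^*)| \;\leq\; 2P + (r+\bar{\tau}) \enspace .
\]
A triangle inequality against the pointwise bound $|\floor{y^*(i,\vecc)} - y^*_{-P}(i,\vecc)| \leq P$, nonzero only on the $\leq r+\bar{\tau}$ coordinates of $\suppo(\vey^*)$, contributes an additional $(r+\bar{\tau})P$ and yields $\|\bar{\vezeta}\|_1 \leq (2\bar{\tau}+r)P$ after absorbing the lower-order additive $2P + (r+\bar{\tau})$ into the leading term (using that $P$ is enormous compared with $r+\bar{\tau}$). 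The main obstacle lies precisely in this integer-versus-fractional split of the transport plan: fractional relocations out of $\hat{\vecc}^i$ may cost arbitrarily little and so cannot be charged against $P$, but their total mass is always bounded by $|\suppo(\vey^*)| \leq r+\bar{\tau}$, which is exactly what keeps the overall estimate polynomial in the parameters.
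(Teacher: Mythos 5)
Your write-up does not prove the assigned statement: the statement is Proposition~\ref{thm:old_proximity} itself, i.e., the proximity bound $\|\vez^*-\vex^*\|_1\leq P$ between the point $\vex^*=\varphi(\vey)$ and some integer optimum $\vez^*$ of~\eqref{eq:hugenfold}, and your very first step is to \emph{invoke} Proposition~\ref{thm:old_proximity}, which is circular. Everything you then establish --- the counting map from bricks of $\vez^*$ back to a solution $\vezeta$ of~\eqref{eq:confilp}, the support containment around the centers in $\suppo(\vey^*)\cup\hat{\CC}$, and the $\ell_1$ bound on $\bar{\vezeta}$ --- is the content of Theorem~\ref{thm:hugeNFoldKernel}, i.e., the paper's downstream reinterpretation of the proximity result, not the proximity result itself. (In the paper, Proposition~\ref{thm:old_proximity} is not re-proved at all; it is imported verbatim from Knop et al.~\cite{KnopKLMO:2019}.)

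A genuine proof would have to explain why an integer optimum of the huge $N$-fold program lies within $\ell_1$-distance $P$ of the particular fractional point $\varphi(\vey)$ built from a small-support optimum of~\eqref{eq:conflp}. That is where the real work lies: one decomposes the difference between this fractional point and an integer optimum into sign-compatible (conformal) elements related to the Graver basis of the $N$-fold matrix, exchanges bounded collections of such elements without worsening the (separable convex) objective, and bounds the number and norms of the elements that cannot be exchanged in terms of $r$, $s$, $t$ and $\|E\|_\infty$; this is precisely the origin of the factors $(2r)^{r+1}(\|E\|_\infty s)^{3rs}$ and of the $t^4\log(t\|E^1_2,\dots,E^{\bar\tau}_2\|_\infty)$ term in $P$, none of which your argument touches. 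As a secondary remark, even read as a proof of Theorem~\ref{thm:hugeNFoldKernel}, your final estimate $(r+\bar{\tau})P + 2P + (r+\bar{\tau})$ cannot simply be ``absorbed'' into $(2\bar{\tau}+r)P$: for $\bar{\tau}\in\{1,2\}$ the inequality $(r+\bar{\tau})P + 2P + (r+\bar{\tau})\leq(2\bar{\tau}+r)P$ fails, whereas the paper avoids this by charging at most $P$ undecided bricks to each of the at most $2\bar{\tau}+r$ centers directly.
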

Proposition~\ref{thm:old_proximity} speaks about the~\eqref{eq:hugenfold} and its relaxation, and our task now is to interpret this as a statement about~\eqref{eq:confilp} and~\eqref{eq:conflp}.
There are two key observations: if $\vez^*$ and $\vex^*$ differ by at most $P$ in $\ell_1$-norm, then, first, they differ in at most $P$ bricks, and second, they differ in each brick by at most $P$ in $\ell_\infty$-norm.
This means that, first, all but at most $P$ bricks in $\vez^*$ take on values $\vecc'$ with $(i,\vecc') \in \suppo(\vey)$, and second, those which have some value $\vecc$ outside of $\suppo(\vey)$ are at $\ell_\infty$-distance at most $P$ from some $\vecc'$ with $(i,\vecc') \in \suppo(\vey) \cup \hat{\CC}$.
So, in a sense, the configurations $\vecc$ for $(i, \vecc) \in \suppo(\vey) \cup \hat{\CC}$ serve as centers around which the whole solution $\vezeta$ is concentrated.
Moreover, notice that by the fact that $\|\suppo(\vey)\| \leq r + \bar{\tau}$ and $|\hat{\CC}| = \bar{\tau}$, the number of these centers is at most $2\bar{\tau} + r$.

The construction of $\vey_{-P}$ reflects the first observation and, for each $(i, \vecc) \in \suppo(\vey) \cup \hat{\CC}$, corresponds to fixing the value to be $\vecc$ for $\max\{0,\floor{y(i,\vecc)-P}\}$ bricks.
Also the bound on $\|\bar{\vezeta}\|_1$ follows from the first observation: after we have fixed the value of all but $P$ bricks for each $(i, \vecc) \in \suppo(\vey) \cup \hat{\CC}$, there are at most $|\suppo(\vey) \cup \hat{\CC}| \cdot P = (2\bar{\tau} + r)P$ bricks whose value is to be decided.
The claim about $\suppo(\bar{\vezeta})$ follows from the second observation: each of the yet-to-be-assigned bricks has to lie near one of the ``centers''.
Since there is a 1:1 correspondence (up to brick permutation) between the solutions of~\eqref{eq:hugenfold} and~\eqref{eq:confilp}, the theorem follows.
\end{proof}
Theorem~\ref{thm:hugeNFoldKernel} intuitively says, that an optimum $\vey^*$ of~\eqref{eq:conflp} tells us around which at most $2\bar{\tau} + r$ ``centers'' will be an optimum of~\eqref{eq:confilp} concentrated.
Based on this, we now construct an equivalent but smaller instance of Huge $N$-fold.

Let $\vey^*$ be an optimum of~\eqref{eq:conflp} as above.
thLet $\tilde{\tau} = \bar{\tau} + |\suppo(\vey^*)|$; this is the number of ``centers''.
We shall now define an instance of huge $\bar{N}$-fold IP with $\tilde{\tau}$ types as follows.
For each $(i, \vecc') \in \suppo(\vey^*) \cup \hat{\CC}$, define a type with blocks $E^i_1$, $E^i_2$, with lower bound $\floor{\vecc'} - \max\{\floor{\vecc'} - P\veone, \vel^i\}$, upper bound $\floor{\vecc'} - \min\{\floor{\vecc'} + (P+1)\veone, \veu^i\}$, right hand side $\veb^i - E^i_2 \floor{\vecc'}$, and objective $\bar{f}^{i, \vecc'}(\bar{\vecc}) = f^i(\floor{\vecc'} + \bar{\vecc})$.
(Note that the new objective is separable quadratic if the old one was.)
Let $\bar{\mu}^{(i, \vecc')} = \min\{P, \floor{y(i, \vecc')}\}$ if $(i, \vecc') \in \suppo(\vey^*)$ and $\bar{\mu}^{(i, \vecc')} = \|\{(\vey^*)^i\}\|_1$ if $(i, \vecc') \in \hat{\CC}$.
Let $\bar{N} = \|\bar{\vemu}\|_1$.
Finally, let
\[
\bar{\veb}^0 \df \veb^0 - B\vey_{-P} -  \sum_{(i, \vecc') \in \suppo(\vey^*) \cup \hat{\CC}} \mu^{(i, \vecc')} E^i_1 \floor{\vecc'} \enspace .
\]
(In other words, we first define an instance whose bounds are restricted to $\CC^i \cap [\floor{\vecc'}-P\veone, \ceil{\vecc'}+P\veone]$ for each $(i, \vecc') \in \suppo(\vey^*) \cup \hat{\CC}$.
This means that the lower and upper bounds are close to each other but still potentially large
Then, we perform the variable transformation $\bar{\vecc} = \vecc - \floor{\vecc'}$ which ``shifts'' the origin of our coordinate system to $\vecc'$ for each brick of type $(i, \vecc')$, in order to shrink the bounds.)

The resulting instance is
\begin{equation} \label{eq:redhugeip}
\min \bar{f}(\bar{\vex}) \colon E^{(\bar{N})} \bar{\vex} = \bar{\veb}, \, \bar{\vel} \leq \bar{\vex} \leq \bar{\veu} \tag{reduced-HugeIP} \enspace .
\end{equation}
The equivalence of the original instance and~\eqref{eq:redhugeip} follows immediately from its definition and Theorem~\ref{thm:hugeNFoldKernel}.

\subsection{Objective Reduction} \label{ssec:reduceobj}
In~\eqref{eq:redhugeip}, everything except possibly $\bar{f}$ is bounded by a polynomial of the parameters.
The next lemma shows how to replace $\bar{f}$ with an equivalent but small function.
\begin{lemma}[Quadratic huge $N$-fold objective reduction] \label{lem:objreduction}
	Let a huge $N$-fold IP instance with a separable quadratic objective~$f$ be given explicitly by its coefficients, and denote $D \df \|\veu - \vel\|_\infty$.
	Then, there is a separable quadratic function $\tilde{f}$ which is equivalent to $f$ on $\{ \vex \mid \vel \le \vex \le \veu \}$ with coefficients bounded by $\coeff(f)$ such that both $\log \tilde{f}^{[\vel, \veu]}_{\max}, \log \coeff(\tilde{f}) \leq \poly(\bar{\tau}, t, \log (ND) )$.
	Moreover, $\tilde{f}$ can be computed in polynomial time.
\end{lemma}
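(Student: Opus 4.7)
The plan is to follow the outline the authors sketched in the introduction: linearize the objective, aggregate by block type, apply Frank--Tardos in the resulting low-dimensional linear space, and then disaggregate. First I would substitute $y^k_j := x^k_j - l^i_j$ in each brick~$k$ of type~$i$, obtaining an equivalent separable quadratic whose variables now lie in $[0, D]$ and whose coefficients are computed in polynomial time from those of $f$ by expanding the squares; any additive constants that arise can be dropped since they do not affect equivalence. Writing the shifted objective as $\sum_{i=1}^{\bar{\tau}} \sum_{k \in T_i} \sum_{j=1}^t \bigl( a^i_j (y^k_j)^2 + b^i_j y^k_j \bigr)$, the key structural observation is that it depends on $\vey$ only through the per-type aggregated quantities $S^i_j(\vey) := \sum_{k \in T_i} y^k_j$ and $Q^i_j(\vey) := \sum_{k \in T_i} (y^k_j)^2$. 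Hence $f$ becomes a \emph{linear} form $\vew \cdot (S, Q)$ of dimension $2 \bar{\tau} t$ with coefficient vector $\vew = (a^i_j, b^i_j)_{i,j}$, evaluated at the integer point $(S, Q) \in \Z^{2 \bar{\tau} t}$. Since $|T_i| \le N$ and $y^k_j \in [0, D]$, all coordinates of $(S, Q)$ lie in $[0, M]$ for $M := N D^2$.

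Next I apply Proposition~\ref{thm:FT} to $\vew$ with box $[-M, M]^{2 \bar{\tau} t}$, obtaining an equivalent integer vector $\tilde{\vew} = (\tilde{a}^i_j, \tilde{b}^i_j)$ with $\|\tilde{\vew}\|_\infty \le 2^{O((\bar{\tau} t)^3)} M^{O((\bar{\tau} t)^2)}$; taking logarithms gives $\log \coeff(\tilde{f}) = \poly(\bar{\tau}, t, \log(N D))$. Testing Frank--Tardos equivalence at $\vezero$ against a single standard-basis vector preserves the sign of every coordinate, so $a^i_j \ge 0$ forces $\tilde{a}^i_j \ge 0$ and the per-type quadratics remain convex. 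I then disaggregate by defining $\tilde{f}^i(\vey) := \sum_{j} \bigl( \tilde{a}^i_j y_j^2 + \tilde{b}^i_j y_j \bigr)$, setting the global objective to $\sum_{i, k \in T_i} \tilde{f}^i(\vey^k)$, and finally reversing the shift $\vey = \vex - \vel$ to express $\tilde{f}$ as a separable quadratic in the original variables (another polynomial-time expansion of squares).

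Equivalence of $f$ and $\tilde{f}$ on $[\vel, \veu]$ then follows because for any feasible $\vex, \vex'$ the comparison $f(\vex) \le f(\vex')$ reduces to the linear inequality $\vew \cdot (S, Q) \le \vew \cdot (S', Q')$ between two integer points of $[-M, M]^{2 \bar{\tau} t}$, which by Frank--Tardos is equivalent to the same inequality with $\tilde{\vew}$, that is, $\tilde{f}(\vex) \le \tilde{f}(\vex')$. The bound $\log \tilde{f}_{\max}^{[\vel, \veu]} \le \poly(\bar{\tau}, t, \log(N D))$ follows by combining $\|\tilde{\vew}\|_\infty$ with the range $M = N D^2$ and the $2\bar{\tau} t$-term sum. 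I expect the main obstacle to be purely bookkeeping --- tracking the shift into the symmetric box Frank--Tardos expects, propagating it through the aggregation, and undoing it when writing $\tilde{f}$ on the original domain --- rather than a conceptual one: the heart of the argument is that a separable quadratic over $N$ identically-structured bricks is a linear function of the per-type aggregated sums and sums of squares, living in the fixed small dimension $2 \bar{\tau} t$, which is exactly the regime in which Frank--Tardos is powerful.
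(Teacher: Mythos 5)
Your proof is correct and follows essentially the same approach as the paper: write the separable quadratic as a linear form $\vew\cdot(S,Q)$ in the $2\bar\tau t$ per-type aggregated sums $S^i_j$ and sums of squares $Q^i_j$, apply Frank--Tardos on the box $[-ND^2,ND^2]^{2\bar\tau t}$, and disaggregate back to a separable quadratic in $\vex$. Your version is in fact mildly cleaner than the paper's: you avoid the paper's superfluous multiplication and later division of the aggregated coefficients by $\mu_i$, your direct chain $f(\vex)\le f(\vex') \iff \vew\cdot(S,Q)\le\vew\cdot(S',Q') \iff \tilde\vew\cdot(S,Q)\le\tilde\vew\cdot(S',Q') \iff \tilde f(\vex)\le\tilde f(\vex')$ makes the paper's separate claim about uniqueness of coefficients across same-type bricks unnecessary, and your sign-preservation remark (testing $\vezero$ against standard basis vectors) makes explicit that $\tilde f$ stays convex, a point the paper leaves implicit but which is needed for the reduced instance to remain a quadratic huge $N$-fold IP.
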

\toappendix{
  \lv{\begin{proof}}\sv{\begin{proof}[Proof of Lemma~\ref{lem:objreduction}]}
  	Our aim is to use Proposition~\ref{thm:FT}.
  	Notice that the statement equivalently applies to any $d$-dimensional box of largest dimension $2M$ because the position of the box $[-M,M]^d$ relative to the origin does not matter.
  	There are two obstacles to straightforwardly applying this theorem to the assumed huge $N$-fold IP instance: first, the objective function is not a~linear function, and second, the dimension is too large, so we would get an encoding length polynomial in $N$ and not $\log N$.
  	Our approach to these obstacles is to first turn the objective into a linear one in an extended space (``linearization''), then use this linearity to aggregate variables across bricks of the same type (``aggregation''), then reduce the resulting linear objective using Proposition~\ref{thm:FT}, and then reverse the take steps, that is, ``deaggregate'' and ``delinearize'' the obtained function.

  	We first deal with the first obstacle.
  	By assumption, for each type $i \in [\bar{\tau}]$, each brick $j \in T_i$, and each coordinate $\ell \in [t]$, the contribution of the variable $x^j_\ell$ to the objective is $\alpha^{j}_\ell (x^j_\ell)^2 + \beta^{j}_\ell x^j_\ell + \gamma^{j}_\ell$ for some $\alpha^{j}_\ell, \beta^{j}_{\ell}, \gamma^{j}_\ell \in \Q$.
  	Since constant terms make no difference between solutions, from now on we assume $\gamma^{j}_\ell = 0$ for all $j$ and $\ell$.
  	We define an auxiliary variable $z^j_\ell = (x_\ell^j)^2$.\footnote{This can be thought of as a quadratic constraint.
  		One need not worry, though, as we are only constructing an equivalent objective to be used with the original instance, not actually introducing quadratic constraints into the formulation.
  		Finally, we are going to ``revert'' this step, since~$\tilde{f}$ is again a separable quadratic function.}
  	Now notice that the function becomes $\alpha^{j}_\ell z^{j}_\ell + \beta^{j}_\ell x^j_\ell$, hence linear over~$x^j_\ell$ and~$z^j_\ell$.
  	Having done this over all coordinates, we obtain an objective function $g(\vex, \vez) = (\vealpha, \vebeta)(\vez, \vex)$ which is linear over the variables $\vex, \vez$, and $\vex$ is a minimum for $f$ if and only if $\vex, \vez$ is a minimum for $g$ (we use the fact that $\vex$ uniquely determines $\vez$).
  	Moreover, assuming that $\vezero \in [\vel, \veu]$, it holds that if $\vex \in [\vel, \veu]$, then $\vez \in [(\vel)^2, (\veu)^2]$, where $(\vel)^2$ is obtained by squaring each component of $\vel$ and analogously for $\veu$.
  	In case $\vezero \not\in [\vel, \veu]$, we can obtain an equivalent instance which satisfies this condition by a simple affine transformation~\cite[Lemma 49]{EisenbrandEtAl2019}.

    \begin{figure}[bt]
      \begin{tikzpicture}[node distance=.5cm]
\tikzstyle{fitter}=[draw,inner sep=1pt]
\newcommand{\QuadToLinColor}{orange!70}
\newcommand{\dimRedColor}{violet!50}
\newcommand{\objRedColor}{yellow!60!black}
\newcommand{\dimExpansionColor}{red!50}
\newcommand{\LinToQuadColor}{blue!50}

\begin{scope}
  \node[fill=\QuadToLinColor,minimum width=1.6cm] (qh-obj) {$f(\vex)$};
  \node[below of=qh-obj] (qh-poly) {$\vex \in [\vel,\veu]$};
  \node[below of=qh-poly] (qh-dim) {$Nt$};

  \node[fitter,fit=(qh-obj)(qh-poly)(qh-dim)] (qhfit) {};
\end{scope}

\begin{scope}[yshift=-3cm]
  \node[fill=\LinToQuadColor] (qh-obj) {$\tilde{f}(\vex)$};
  \node[below of=qh-obj] (qh-poly) {$\vex \in [\vel,\veu]$};
  \node[below of=qh-poly] (qh-dim) {$Nt$};

  \node[below of=qh-dim,yshift=-.2cm,xshift=1cm] (qh-ex1) {$\tilde{f}_{\max} = Nt (ND^2)^{O((t\bar{\tau})^3)} \vex_{\max}$};
  \node[below of=qh-ex1] (qh-ex2) {$\langle \coeff(\tilde{f}) \rangle = O((t\bar{\tau})^3) \log(ND) + \|\vemu\|_\infty$};

  \node[fitter,fit=(qh-obj)(qh-poly)(qh-dim)] (qhrfit) {};
\end{scope}

\begin{scope}[xshift=5.5cm]
  \node[fill=\QuadToLinColor] (lh-obj) {$g(\vex,\vez) = (\vealpha,\vebeta)(\vez,\vex)$};
  \node[below of=lh-obj] (lh-poly) {$(\vex,\vez) \in P$};
  \node[below of=lh-poly,fill=\dimRedColor] (lh-dim) {$2Nt$};

  \node[fitter,fit=(lh-obj)(lh-poly)(lh-dim)] (lhfit) {};
\end{scope}

\begin{scope}[xshift=5.5cm,yshift=-3cm]
  \node[fill=\LinToQuadColor] (lh-obj) {$\tilde{g}(\vex,\vez) = (\tilde{\vealpha},\tilde{\vebeta})(\vez,\vex)$};
  \node[below of=lh-obj] (lh-poly) {$(\vex,\vez) \in P$};
  \node[below of=lh-poly,fill=\dimExpansionColor] (lh-dim) {$2Nt$};
  \node[below of=lh-dim, xshift=1.5cm,yshift=-.2cm] (lh-ex1) {$\langle \tilde{\alpha}_i \rangle = O((t\bar{\tau})^3) \log(ND) + \|\vemu\|_\infty$};
  \node[below of=lh-ex1] (lh-ex2) {$\langle \tilde{\beta}_i \rangle = O((t\bar{\tau})^3) \log(ND) + \|\vemu\|_\infty$};

  \node[fitter,fit=(lh-obj)(lh-poly)(lh-dim)] (lhrfit) {};
\end{scope}

\begin{scope}[xshift=10cm]
  \node[fill=\objRedColor] (lrd-obj) {$(\veA,\veB)(\veX,\veZ)$};
  \node[below of=lrd-obj] (lrd-poly) {$(\veX,\veZ) \in Q$};
  \node[below of=lrd-poly,fill=\dimRedColor] (lrd-dim) {$2t\bar{\tau}$};

  \node[fitter,fit=(lrd-obj)(lrd-poly)(lrd-dim)] (lrdfit) {};
\end{scope}

\begin{scope}[xshift=10cm,yshift=-3cm]
  \node[fill=\objRedColor] (lrd-obj) {$(\tilde{\veA},\tilde{\veB})(\veX,\veZ)$};
  \node[below of=lrd-obj] (lrd-poly) {$(\veX,\veZ) \in Q$};
  \node[below of=lrd-poly,fill=\dimExpansionColor] (lrd-dim) {$2t\bar{\tau}$};

  \node[fitter,fit=(lrd-obj)(lrd-poly)(lrd-dim)] (rlrdfit) {};
\end{scope}

  \draw[->,line width=4pt,color=\QuadToLinColor] (qhfit) to node[midway,yshift=.8cm,color=black] {\rotatebox{45}{linearize}} (lhfit);
  \draw[->,line width=4pt,color=\dimRedColor] (lhfit) to node[midway,yshift=.8cm,color=black] {\rotatebox{45}{aggregate}}(lrdfit);
  \draw[->,line width=4pt,color=\objRedColor] (lrdfit) to node[midway,xshift=.7cm,color=black] {reduce} (rlrdfit);
  \draw[->,line width=4pt,color=\dimExpansionColor] (rlrdfit) to node[midway,yshift=0.9cm,xshift=.3cm,color=black] {\rotatebox{45}{deaggregate}} (lhrfit);
  \draw[->,line width=4pt,color=\LinToQuadColor] (lhrfit) to node[midway,yshift=0.9cm,color=black] {\rotatebox{45}{delinearize}} (qhrfit);
\end{tikzpicture}
      \caption{\label{fig:objectiveReduction}%
        In the figure above, we have $P = \{ (\vex,(\vex)^2) \mid \vel\le\vex\le\veu \}$, $\vezero \in P$, $Q = [-ND^2, ND^2]^{2t\bar{\tau}}$.
        Each box corresponds to a stage in the process of reducing $f$ to $\tilde{f}$ and contains three elements: \begin{enumerate\sv{*}}[label=\bfseries(\arabic*)]
          \item a currently considered objective function equivalent to $f$, \item the space under consideration, and \item the dimension of this space.
        \end{enumerate\sv{*}}
        The change corresponding to each step is given a distinct color, which highlights the main feature which is being changed (e.g., changing the objective from quadratic to linear, shrinking the dimension, reducing the coefficients, etc.).
      }
    \end{figure}

  	Now we overcome the second obstacle.
  	For each $i \in [\bar{\tau}]$ and $\ell \in [t]$, we define $Z^{i}_\ell = \sum_{j \in T_i} z^{j}_\ell$ and $X^{i}_\ell = \sum_{j \in T_i} x^{j}_\ell$ and write $\veX, \veZ$.
  	Similarly define $A^{i}_\ell \df \mu_i \alpha^{j}_\ell$, $B^{i}_\ell \df \mu_i \beta^{j}_\ell$ for any $j \in T_i$, and denote $\veA \df (A^{1}_1, \dots, A^{\bar{\tau}}_t)$ and similarly $\veB$.
  	Note that, since we assume $\vezero \in [\vel,\veu]$ and $\| \vel - \veu \|_\infty = D$, we have that $\|\vex\|_\infty \le D$ and thus $\| \veX \|_\infty \le ND$ and $\|\veZ\|_\infty \le ND^2$, since also $\|\vez\|_\infty \le D^2$.
  	Consequently, we can assume that $(\veX,\veZ) \in Q = [-ND^2, ND^2]^{2 t \bar{\tau}}$.
  	Observe that minimizing $g(\vex, \vez)$ (and hence $f(\vex)$) is equivalent to minimizing $\sum_{i \in [\bar{\tau}]} \sum_{\ell \in [t]} A^i_\ell Y^i_\ell + B^i_\ell X^i_\ell = (\veA, \veB) (\veZ, \veX)$, which is a linear function over $2t\bar{\tau}$ variables which take on values at most $N D^2$ over the feasible region.
  	Applying the algorithm of Frank and Tardos to $(\veA, \veB)$ yields (in polynomial time) vector~$(\tilde{\veA}, \tilde{\veB})$ with $\|(\tilde{\veA}, \tilde{\veB})\|_\infty \leq (N D^2)^{\Oh(t \bar{\tau})^3}$ such that $(\tilde{\veA}, \tilde{\veB})$ is equivalent to $(\veA, \veB)$ on~$Q$.
  	Now we can work backwards to obtain an equivalent quadratic function $\tilde{f}$ in the original space of variables $\vex$.

  	\begin{claim}
  		Any function equivalent to $f$ on the variables $\vex$ on~$[-1,2]^{Nt}$ must have identical coefficients across bricks of the same type.
  	\end{claim}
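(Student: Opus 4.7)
The plan is to exploit the symmetry of $f$ across bricks of the same type. By the definition of type, for any two bricks $j, j' \in T_i$ and any coordinate $\ell \in [t]$ we have $\alpha^j_\ell = \alpha^{j'}_\ell$ and $\beta^j_\ell = \beta^{j'}_\ell$, so the value of $f$ is unchanged under swapping the $(j,\ell)$ and $(j',\ell)$ entries of $\vex$. Any separable quadratic function equivalent to $f$ must respect these level sets, and I will argue that this is possible only if its per-brick coefficients also agree across bricks of the same type.

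Concretely, fix $i \in [\bar{\tau}]$, bricks $j, j' \in T_i$, a coordinate $\ell \in [t]$, and scalars $a, b \in [-1, 2]$. Define $\vex, \vey \in [-1, 2]^{Nt}$ by setting all coordinates to $0$ except $(x^j_\ell, x^{j'}_\ell) = (a, b)$ and $(y^j_\ell, y^{j'}_\ell) = (b, a)$; both vectors lie in the box since $0 \in [-1,2]$. By the symmetry above, $f(\vex) = f(\vey)$, so equivalence on $[-1,2]^{Nt}$ forces $\tilde f(\vex) = \tilde f(\vey)$. Writing out the separable quadratic form of $\tilde f$ and cancelling the terms that depend only on the coordinates shared by $\vex$ and $\vey$ yields
\[
(\tilde\alpha^j_\ell - \tilde\alpha^{j'}_\ell)(a^2 - b^2) + (\tilde\beta^j_\ell - \tilde\beta^{j'}_\ell)(a - b) = 0.
\]

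Specializing to $(a,b) = (1,0)$ and $(a,b) = (2,0)$ produces a nonsingular linear system (determinant $-2$) in the unknowns $u = \tilde\alpha^j_\ell - \tilde\alpha^{j'}_\ell$ and $v = \tilde\beta^j_\ell - \tilde\beta^{j'}_\ell$, forcing $u = v = 0$. Ranging over all $i \in [\bar{\tau}]$, $j, j' \in T_i$, and $\ell \in [t]$ establishes the claim for the quadratic and linear coefficients; constant coefficients cancel in any such comparison (and were normalised to zero earlier in the proof of the lemma). The only potential subtlety is having two choices of $(a,b)$ inside the box that give linearly independent equations in $(u,v)$, which the interval $[-1,2]$ comfortably affords, so I do not anticipate any genuine obstacle in carrying out this argument.
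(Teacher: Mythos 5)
Your proof is correct and takes essentially the same approach as the paper: exploit the invariance of $f$ under swapping the $\ell$-th coordinates of two bricks of the same type, expand the separable quadratic $\tilde{f}$ at two swap-pairs of test points in $\{0,1,2\}$, and solve the resulting nonsingular $2\times 2$ linear system to conclude that both the quadratic and linear coefficients coincide. The paper picks the slightly different pairs $(1,-1)$ and $(2,1)$, which makes the system triangular, but the substance of the argument is the same.
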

  	\begin{claimproof}
  		Let $\tilde{f}$ be any separable quadratic function equivalent to $f$ on $[-1,2]^{Nt}$, and $\tilde{\alpha}^{j}_\ell, \tilde{\beta}^{j}_\ell$ be such that the contribution of variable $x^{j}_\ell$ of type $i$ is $\tilde{\alpha}^{j}_\ell (x^{j}_\ell)^2 + \tilde{\beta}^{j}_\ell x^{j}_\ell$.
  		In what follows we fix~$i \in [\tilde{\tau}]$.

  		We first prove that $\tilde{\beta}^{j}_\ell = \tilde{\beta}^{k}_\ell$ for each $j,k \in T_i$ and $\ell \in [t]$; fix $j,k,\ell$.
  		We define a point~$\vex$ which is~$0$ everywhere except two coordinates $x^j_\ell = 1$ and $x^k_\ell = -1$.
  		Now, we observe that $f(\vex) = f(-\vex)$.
  		Indeed, since $f$ and~$\tilde{f}$ are equivalent on $[-1,1]^{Nt} \subseteq [\vel,\veu]$, it must hold that $\tilde{f}(\vex) = \tilde{f}(-\vex)$.
  		This implies that
  		\[
  		\tilde{f}(\vex) = \tilde{\alpha}^{j}_\ell + \tilde{\beta}^{j}_\ell + \tilde{\alpha}^{k}_\ell - \tilde{\beta}^{k}_\ell
  		=
  		\tilde{\alpha}^{j}_\ell - \tilde{\beta}^{j}_\ell + \tilde{\alpha}^{k}_\ell + \tilde{\beta}^{k}_\ell = \tilde{f}(-\vex)\,.
  		\]
  		It follows that $2\tilde{\beta}^{j}_\ell = 2 \tilde{\beta}^{k}_\ell$ and we get $\tilde{\beta}^{j}_\ell = \tilde{\beta}^{k}_\ell$.

  		It remains to show that $\tilde{\alpha}^{j}_\ell = \tilde{\alpha}^{k}_\ell$ for each $j,k \in T_i$ and $\ell \in [t]$; again, fix $j,k,\ell$.
  		We define two points~$\vex$ and~$\hat{\vex}$ which are~$0$ everywhere except two coordinates: $x^j_\ell = 2, x^k_\ell = 1$ and $\hat{x}^j_\ell = 1, \hat{x}^k_\ell = 2$.
  		Observe that we have $f(\vex) = f(\hat{\vex})$ which implies~$\tilde{f}(\vex) = \tilde{f}(\hat{\vex})$.
  		Now, we get that
  		\[
  		\tilde{f}(\vex) = 4\tilde{\alpha}^{j}_\ell + 2\tilde{\beta}^{j}_\ell + \tilde{\alpha}^{k}_\ell + \tilde{\beta}^{k}_\ell
  		=
  		\tilde{\alpha}^{j}_\ell + \tilde{\beta}^{j}_\ell + 4\tilde{\alpha}^{k}_\ell + 2\tilde{\beta}^{k}_\ell = \tilde{f}(\hat{\vex})\,.
  		\]
  		Thus, since $\tilde{\beta}^{j}_\ell = \tilde{\beta}^{k}_\ell$, we have that $3\tilde{\alpha}^{j}_\ell = 3\tilde{\alpha}^{k}_\ell$ and the claim follows.
  	\end{claimproof}

  	We define, for each $i \in [\bar{\tau}]$, $j \in T_i$, and $\ell \in [t]$, $\tilde{\alpha}^{j}_\ell = \frac{\tilde{A}^i}{\mu_i}$, $\tilde{\beta}^{j}_\ell = \frac{\tilde{B}^i}{\mu_i}$, and define $\tilde{g}(\vex, \vez) = (\tilde{\vealpha}, \tilde{\vebeta}) (\vex, \vez)$.
  	Clearly, $\tilde{g}$ is equivalent to $g$ on $P$ by the equivalence of $(\veA, \veB)$ and $(\tilde{\veA}, \tilde{\veB})$ on~$Q$ and the claim above.
  	Finally, we define $\tilde{f}$ to be the separable quadratic function where the contribution of variable $x^j_\ell$, $j \in T_i$, is $\tilde{\alpha}^j_\ell (x^j_\ell)^2 + \tilde{\beta} x^j_\ell$.
  	By the definition of the variables $z^j_\ell$, a point $\vex \in [\vel, \veu]$ is a minimum for $\tilde{f}$ if and only if $(\vex, \vez)$ is a minimum for $\tilde{g}$.
  	Now, $(\vex,\vez)$ is a minimum for $\tilde{g}$ if and only if it is a minimum for $g$.
  	Thus, $\vex$ is a minimum for $f$, hence $\tilde{f}$ is equivalent with $f$ on $[\vel, \veu]$.
  	Furthermore, we get that $\log \tilde{f}_{\max}^{[\vel, \veu]}, \log \coeff(\tilde{f}) \leq \poly(\bar{\tau}, t, \log (ND) )$.
  \end{proof}
}
The consequence of Lemma~\ref{lem:frac}, Theorem~\ref{thm:hugeNFoldKernel}, and Lemma~\ref{lem:objreduction} is the following:
\begin{reptheorem}{thm:redhugeip}
	If~\eqref{eq:separation} is solvable in polynomial time, then quadratic huge $N$-fold IP admits a polynomial kernel when parameterized by $(\bar{\tau}, t, \|E\|_\infty)$.
\end{reptheorem}

\section{Conclusions and Research Directions}
We designed kernels for scheduling problems based on the~\eqref{eq:conflp}, which may be of practical interest due to the ubiquity of~\eqref{eq:conflp} (aka column generation) in practice.

On the side of theory, one may wonder why not apply the approach developed here to other scheduling problems, in particular those modeled as quadratic huge $N$-fold IP in~\cite{KnopKLMO:2019}, such as $R|r_j,d_j|\sum w_j C_j$.
The answer is simple: we are not aware of a way to solve the separation problem in polynomial time; in fact, we believe this to be a hard problem roughly corresponding to \textsc{Unary Vector packing} in variable dimension.
However, the typical use of Configuration LP is not to obtain an exact optimum (which is often hard), but to obtain an approximation which is good enough.
Perhaps a similar approach within our context may lead to so-called \emph{lossy kernels}~\cite{lokshtanov2017lossy}?
However, it is not even clear that an approximate analogue of Theorem~\ref{thm:hugeNFoldKernel} holds, because getting an LP solution whose value is close to optimal does not immediately imply getting a solution which is (geometrically) close to some optimum; cf. the discussions on $\epsilon$-accuracy in~\cite[Definition 31]{EisenbrandEtAl2019} and~\cite[Introduction]{HochbaumShantikumar1990}.



\bibliography{scheduling}

\appendix
\clearpage

\appendixText

\end{document}